\DeclareMathSymbol{\lsb@l}{\mathalpha}{letters}{`l}
\newfont{\authfntsmall}{phvr at 11pt}
\newfont{\eaddfntsmall}{phvr at 9pt}
\setlist[description]{font=\normalfont\itshape,itemsep=0ex,partopsep=0ex}
\newtheorem{thm}{Theorem}
\newtheorem{prop}[thm]{Proposition}
\newtheorem{lem}[thm]{Lemma}
\def\paragraph#1{\smallskip\noindent{\bf #1.}}
\newenvironment{algoenv}[3][\linewidth]{
\begin{minipage}{#1} %
\flushleft
\rule{\textwidth}{.08em}\vspace{-\baselineskip}\smallskip
\begin{description}[noitemsep]
\item[\rlap{Input}\phantom{Output}] #2
\item[Output] #3
\end{description}
\vspace{-\baselineskip}
\rule{\textwidth}{.05em}
\begin{algorithmic}}
{\end{algorithmic}
\vspace{-.5\baselineskip}
\rule{\textwidth}{.08em}
\end{minipage}}
\newcommand{\lc}{\operatorname{lc}}
\newcommand{\bigO}{{{O}}}
\newcommand{\softO}{\tilde{\bigO}}
\def\le{\leqslant}
\def\ge{\geqslant}
\newcommand{\ndiv}{\hspace{-4pt}\not|\hspace{2pt}}
\newcommand{\cmid}{\hspace{-1pt}\mid\hspace{-1pt}}
\newcommand{\Rdeg}{\operatorname{Rdeg}_n}
\def\KK{\mathbb{K}}
\def\kk{\mathbf{k}}
\def\QQ{\mathbb{Q}}
\def\ZZ{\mathbb{Z}}
\def\gathen#1{{#1}}
\title{Efficient Algorithms for Mixed Creative Telescoping}	
\author{
  \alignauthor
  Alin Bostan\\
  \affaddr{Inria}\\
  \affaddr{France}\\
  \email{Alin.Bostan@inria.fr}
  \alignauthor
  Louis Dumont\\
  \affaddr{Inria}\\
  \affaddr{France}\\
  \email{Louis.Dumont@inria.fr}
  \alignauthor
  Bruno Salvy\\
  \affaddr{Inria, Laboratoire LIP}\\
  \affaddr{(U.~Lyon,~CNRS,~ENS~Lyon,~UCBL)}\\
  \affaddr{France}\\
  \email{Bruno.Salvy@inria.fr}
}
\begin{document}



\maketitle

\begin{abstract}
Creative telescoping is a powerful computer algebra paradigm --initiated by Doron Zeilberger in the 90's-- for dealing with definite integrals and sums with parameters.  
We address the mixed continuous–discrete case, and focus on the  
integration of bivariate hypergeometric-hyperexponential terms.
We design a new creative telescoping algorithm operating on this class of inputs, based on a Hermite-like reduction procedure. The new algorithm has two nice features: it is efficient and it delivers, for a suitable representation of the input, a minimal-order telescoper.
Its analysis reveals tight bounds on the sizes of the telescoper it produces.
\end{abstract}

\ccsdesc[500]{Computing methodologies~Algebraic algorithms}
\printccsdesc

\keywords{symbolic integration; creative telescoping; hypergeometric-hyperexponential term; Hermite reduction}

\section{Introduction}\label{sec:intro}

\paragraph{Context}
\emph{Creative telescoping} is an algorithmic approach introduced in
computer algebra by Zeilberger~\cite{Zeilberger90,Zeilberger91,WiZe92} to
address definite summation and integration for a large class of
functions and sequences involving parameters. 

In this article, we focus on the mixed continuous–discrete case.
Given a term $F_n(x)$ that is both hypergeometric (i.e., $F_{n+1}(x)/F_n(x)$ is a rational function) and hyperexponential (i.e., $F_n'(x)/F_n(x)$ is a rational function), the question is to find a linear recurrence relation satisfied by the sequence of integrals $I_n = \int_\gamma F_n(x) dx$ over a domain~$\gamma$ where~$F_n(x)$ is integrable.
To do this, the method of creative telescoping looks for polynomials $c_0(n), \ldots, c_r(n)$, not all zero, and for a rational function $Q(n,x)$ such that $G_n(x) = Q(n,x) F_n(x)$ satisfies the telescoping relation
\begin{equation}\label{eqn:prob-ct}
 L(F_n(x))\stackrel{\text{def}}{=} \sum_{i=0}^r c_i(n) F_{n+i}(x) =  G_n'(x).
\end{equation}
The recurrence operator~$L=\sum_{i=0}^r c_i(n) S_n^i$ in the shift operator $S_n$ is called a \emph{telescoper} for~$F_n(x)$, and the rational function $Q(n,x)$ is called a \emph{certificate} for the telescoper~$L$.
The integer~$r$ is the \emph{order} of~$L$ and $\max_j \deg c_j$ is its \emph{degree}. 

Throughout the article, the ground field, denoted by~$\kk$, is assumed to be of
characteristic zero. Under suitable additional assumptions, $L(I_n)=0$ is a recurrence relation satisfied by the sequence of
integrals $I_n = \int_\gamma F_n(x) dx$. 

\noindent A subtle point is that not all hypergeometric-hyperexponential terms
admit telescopers. A criterion for deciding when this is the case has been given only recently~\cite[Section~6]{ChChFeFuLi15}: a hypergeometric-hyperexponential term is telescopable if and only if it can be written as the sum of a derivative and of a \emph{proper term}. In our context, proper terms are of the form 
\begin{equation}\label{eqn:proper}
P(n,x) \cdot H(x) ^{\!n} \cdot\, \exp\!\left(\int\!\frac{S(x)}{T(x)}\right)\cdot \Upsilon(n),
\end{equation}
for $P\in\kk[n,x]$, $H\in\kk(x)$, $S,T\in\kk[x]$ and $\Upsilon$ hypergeometric.

Several methods are known for computing a telescoper~$L$ and the corresponding
certificate~$Q$, even in much greater generality~\cite{AlZe90,
WiZe92,Chyzak00,Tefera02,ApZe06,Koutschan13}. Despite a very rich research
activity during the last decade, not much is known about the complexity of
creative telescoping methods when applied to an input of the 
type~\eqref{eqn:proper}. In particular, few estimates are available in the
literature on the order and degree of the minimal-order telescoper. 

If~$\Upsilon(n)\neq 1$, one can compute a telescoper of the form~\eqref{eqn:prob-ct} for the rest and then multiply the coefficient $c_i(n)$ by the rational function $\Upsilon(n)/\Upsilon(n+i)$; this process does not affect the minimality of the telescoper. Thus we focus on the case 
\begin{equation}\label{eqn:proper-bis}
F_n(x)=P(n,x) \cdot H(x)^{\!n} \cdot\, \exp\!\left(\int\!\frac{S(x)}{T(x)}\right).
\end{equation}


\paragraph{Previous work}
Among the various classes of creative telescoping methods, three different
approaches allow to obtain bounds on the sizes of the telescoper and the
certificate, and to control the algorithmic complexity. The first approach is
based on elimination techniques~\cite{Zeilberger90,WiZe92,Yen93}; 
it generally yields
pessimistic bounds, not very efficient algorithms and telescopers of
non-minimal order. The second approach, initiated by Apagodu and
Zeilberger~\cite{ApZe06}, provides sharp bounds on the order and degree of
telescopers, more efficient algorithms, but does not provide information on telescopers of minimal order and works under a restrictive genericity
assumption.
It was successfully applied by Kauers and
co-authors~\cite{ChKa12,ChKa12b,KaYe15} to (bivariate) hyperexponential terms
and hypergeometric terms. The third approach, based on Hermite-like reduction,
is the only one that computes telescopers of minimal order, while guaranteeing
a good control on sizes and complexity. It has been introduced by Bostan \emph{et alii}
for the integration of bivariate rational functions~\cite{BoChChLi10}, then
extended to bivariate hyperexponential functions~\cite{BoChChLiXin13},
multivariate rational functions~\cite{BoLaSa13, Lairez2014} and recently
adapted to summation~\cite{ChHuKa15} for bivariate hypergeometric terms. The
present work is part of the on-going effort in this direction.

\paragraph{Contributions}
We present the first Hermite-style algorithm in the mixed
(continuous–discrete) setting. Our approach is inspired by the proof of
Manivel's lemma~\cite[\S2.3]{Furter15}, originally designed 
in connection with the so-called \emph{polynomial rigidity conjecture}.

Our algorithm works on terms of the form~\eqref{eqn:proper-bis}. 
Its input is $F_n(x)=P(n,x)\Phi(n,x)$, given by
\begin{equation}\label{eqn:input}\tag{{$\star$}}
P(n,x),\quad 
\frac{\Phi'}{\Phi}(n,x)
=\frac{A(n,x)}{B(x)}
=n\frac{H'(x)}{H(x)}+\frac{S(x)}{T(x)},
\end{equation}
i.e., a polynomial in $\kk[n,x]$ and a rational function in $\kk(n,x)$ of a special form given either in lowest terms ($A/B$ with $\gcd(A,B)=1$ and $\deg_nB=0$) or decomposed as the sum of the logarithmic derivative of a rational function~$H\in\kk(x)$ 
multiplied by $n$ and another rational function $S/T\in\kk(x)$. Note that, for a given term, several choices are available for $P$ and $\Phi$. If we further assume that $\Phi'/\Phi$ has no positive integer residue, they become unique and we call them the \emph{minimal decomposition} of $F_n(x)$.

Our main results consist in bounds on the order and degree of telescopers for $F_n(x)$, that are summarized in Theorem~\ref{thm:main theorem} below (see Theorems~\ref{thm:mixedct-order} and \ref{thm:bound-degree} for more precise statements). A complexity analysis of the algorithms leading to these bounds is conducted in section~\ref{sec:complexity} (see in particular Theorem~\ref{thm:global-complexity}).

\smallskip \noindent In the statement of the following theorem, $\deg_xH$ and $\deg_n P$ denote the maximum of the degrees of their numerator and denominator, and $d_H$ is the degree of $H$ at infinity. 

\begin{thm}\label{thm:main theorem}
Given a decomposition as in~\eqref{eqn:input}, $F_n(x)$ 
admits a telescoper of order $r$ bounded by $\delta$, where
\begin{equation}\label{eq:eqdelta}
\delta:=\max(\deg_xA,\deg_x B-1),
\end{equation} 
and degree bounded by
\[r(\deg_xH+\deg_n P)+\deg_x P.\]

Moreover, if the decomposition is minimal then these bounds apply to a minimal-order telescoper for $F_n(x)$.
\smallskip

Algorithm \emph{\textsf{MixedCT}} (\S\ref{sec:mixedCT}) produces a telescoper with these properties. 
If $d$ denotes an upper bound on the degrees of the numerator and denominator of $H$, and of all the polynomials in~\eqref{eqn:input}, and if all these polynomials are square-free, then the telescoper has arithmetic size $\bigO(d^3)$ and \emph{\textsf{MixedCT}} computes it using at most
\begin{equation*}
\begin{cases}
\softO(d^{\omega+1})\quad &\mathrm{if}\ d_H<0,\\
\softO(d^5)\quad &\mathrm{if}\ d_H\ge 0
\end{cases}
\end{equation*}
arithmetic operations in $\kk$, where $\omega$ denotes a feasible matrix multiplication exponent for~$\kk$.
\end{thm}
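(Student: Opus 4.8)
\medskip\noindent\emph{Proof plan.}
The statement collects the paper's main technical results, so the plan is to set up a Hermite-style reduction adapted to the mixed case and then read off each estimate. The starting point is the identity $\Phi(n+i,x)=H(x)^i\,\Phi(n,x)$, immediate from $\Phi'/\Phi=nH'/H+S/T$, which turns the telescoping equation~\eqref{eqn:prob-ct}, after division by $\Phi(n,x)$, into
\[
\sum_{i=0}^r c_i(n)\,P(n+i,x)\,H(x)^i \;=\; Q'(n,x)+Q(n,x)\,\frac{A(n,x)}{B(x)} ,
\]
i.e.\ into the requirement that the hyperexponential term $\bigl(\sum_i c_i(n)P(n+i,x)H^i\bigr)\Phi$ admit a rational certificate. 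The core construction, in the spirit of the proof of Manivel's lemma, is a $\kk(n)$-linear reduction $R\mapsto\rho(R)$ on $\kk(n,x)$ such that $R\Phi$ and $\rho(R)\Phi$ differ by the derivative of a rational multiple of $\Phi$, with $\rho(R)=0$ exactly when $R\Phi$ has a rational certificate, and with $\rho(R)$ confined to a fixed $\kk(n)$-subspace of dimension at most $\delta=\max(\deg_xA,\deg_xB-1)$ (square-free denominator, numerator of $x$-degree below~$\delta$).

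\textbf{Order and minimality.} Given the reduction, the $\delta+1$ vectors $\rho\bigl(P(n+i,x)H^i\bigr)$ for $i=0,\dots,\delta$ lie in a space of dimension $\le\delta$, hence are $\kk(n)$-linearly dependent; clearing denominators in such a dependence yields a telescoper of order $r\le\delta$ (this is Theorem~\ref{thm:mixedct-order}). When the decomposition is minimal, i.e.\ $\Phi'/\Phi$ has no positive integer residue, I would show the reduction is \emph{faithful}: any telescoper $L$ for $F_n$ forces $\rho\bigl(\sum_i c_i(n)P(n+i,x)H^i\bigr)=0$, so a minimal-order telescoper corresponds to a minimal $\kk(n)$-dependence among the $\rho(P(n+i,\cdot)H^i)$ and still has order at most $\delta$.

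\textbf{Degree and complexity.} The coefficients $c_i(n)$ are extracted from a polynomial linear system whose columns are the coordinate vectors of the $\rho(P(n+i,x)H^i)$; tracking how $\deg_n$ and denominators propagate through the reduction, together with the $H^i$-graded shape of the terms $P(n+i,x)H^i$, bounds a solution by $r(\deg_xH+\deg_nP)+\deg_xP$ (Theorem~\ref{thm:bound-degree}). Algorithm \textsf{MixedCT} just runs this pipeline --- precompute the reduction data for $\Phi$, reduce $P(n+i,x)H^i$ for $i=0,\dots,\delta$, solve the system for the $c_i$ --- so the size and cost estimates follow by instantiating all $x$- and $n$-degrees at $d$: the telescoper has arithmetic size $\bigO(d^3)$, and when $d_H<0$ the linear system has dimensions and entry-degrees $\bigO(d)$, hence is solvable in $\softO(d^{\omega+1})$ by an evaluation-interpolation or matrix-multiplication based solver. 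The parts I expect to be most delicate are, first, making the reduction \emph{complete}, so that ``$\rho(R)=0$'' really is equivalent to telescopability even though the residues of $A/B=nH'/H+S/T$ depend on~$n$ --- this is where the minimal-decomposition hypothesis is essential and where the tightness of $\delta$ is won --- and, second, the case $d_H\ge0$, in which the reduction ``at infinity'' is no longer uniform in $n$, the degrees feeding the linear system grow, and more care is needed to pin the cost at $\softO(d^5)$; both are handled in Section~\ref{sec:complexity}, culminating in Theorem~\ref{thm:global-complexity}.
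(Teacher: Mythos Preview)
Your plan is the paper's: confine all $P(n+i,x)H^i\Phi$ modulo derivatives into a fixed $\delta$-dimensional $\kk(n)$-space, extract the order bound from a linear dependence, use the no-positive-integer-residue hypothesis for faithfulness and hence minimality, and analyze degrees and costs of the pipeline. Two points deserve correction.

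First, the confined space in the paper consists of \emph{polynomials} of degree $<\delta$, not rational functions with square-free denominator: the Hermite step (Lemmas~\ref{lemma:BasicReduction}--\ref{lemma:A+B'}) clears the denominator $g$ of $H$ entirely, exploiting that $n$ appears in $A$ to guarantee $\gcd(G,A+B')=1$ for each square-free factor $G$ of $g$, and then Confinement (Lemma~\ref{lemma:confinement}) cuts the $x$-degree below~$\delta$. Second, and more substantively, the sharp degree bound $r(\deg_xH+\deg_nP)+\deg_xP$ does \emph{not} come from tracking $\deg_n$ of the reduced vectors $R_i$ and applying Cramer to the $\delta\times(\delta+1)$ system they form---those $n$-degrees are analyzed in Lemma~\ref{lem:bound-matrix} and feed only the \emph{complexity} estimate in Theorem~\ref{thm:global-complexity}, giving something weaker. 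For the degree bound itself the paper first pins down the shape of the certificate (Lemma~\ref{lem:degree bound of certificate}: it is $(Q/g^r)B\Phi$ with $\deg_xQ\le r\deg_xH+\max(\deg_xP-\delta,0)-1$), then rewrites the telescoping identity as a polynomial linear system in the $c_i$ and the coefficients of $Q$, where the $c_i$-columns have $n$-degree $\le\deg_nP$ and the $Q$-columns have $n$-degree~$1$; Hadamard on \emph{that} system yields Theorem~\ref{thm:bound-degree}. Your sketch is right in spirit, but this distinction is what makes the stated bound attainable.
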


\noindent Note that one can always choose the decomposition of $F_n(x)$ in $\eqref{eqn:input}$ to be minimal. Indeed, we may write $\Phi=Q\tilde{\Phi}$ where $Q$ is a polynomial and $\tilde{\Phi}'/\tilde{\Phi}$ has no positive integer residue. We then have a minimal decomposition $F_n(x)=(PQ)\tilde{\Phi}$.

The proof of Theorem~\ref{thm:main theorem} is achieved through two main ingredients: \emph{confinement} and \emph{Hermite reduction}. Confinement is the property that given $\Phi$, any polynomial $P$ can be reduced modulo derivatives to a polynomial $R$ of degree at most $\delta-1$: $P\Phi=R\Phi+\Gamma'$ for some~$\Gamma$. It gives a finite dimensional vector space over $\kk(n)$, where the computation will be confined. Hermite reduction is more classical. In this context, it consists in an algorithm that performs a reduction~$P\Phi(n+1,x)=R\Phi(n,x)+\Gamma'$ for some~$\Gamma$. By iterated application of both these operations, all $P(n+i,x)\Phi(n+i,x)$ for $i=0,\dots,\delta$ can be rewritten in a vector space of dimension $\delta$ over $\kk(n)$. Thus by (polynomial) linear algebra there exists a non-trivial linear relation between them, i.e., a telescoper. A more careful study of the increase of the degrees in~$n$ during these rewritings gives the degree bound.

\paragraph{Notation}
In all that follows, $\kk$ and $\KK$ will denote fields of characteristic~0. In our applications we will often set $\KK=\kk(n)$.
We denote by $\KK[x]_d$ the set of polynomials in $\KK[x]$ of degree less than~$d$.

Rational functions are always written in reduced form, with monic denominator. Thus the numerator and denominator are defined without ambiguity. If $k$ is the degree of the numerator and $\ell$ the degree of the denominator of a rational function~$F$, we say that $F$ has \emph{rational degree} $(k,\ell)$, that we denote $\operatorname{Rdeg}(F)=[k]/[\ell]$. We also define the \emph{regular degree} $\deg(F)$ of $F$ as $\deg(F)=\max(k,l)$. Finally, the \emph{degree at infinity} of $F$ is defined as $\deg^\infty(F)=k-l$. The variable with respect to which the degree is taken will be indicated as a subscript when there is an ambiguity.

A polynomial is called \emph{square-free} when its gcd with its derivative is trivial.
The \emph{square-free decomposition} of a monic polynomial $Q\in\KK[x]$ is a factorization~$Q=Q_1^1\dotsm Q_m^m$, with $Q_i\in\KK[x]$ monic and square-free, the $Q_i$'s pairwise coprime and~$\deg_x(Q_m)>0$. The \emph{square-free part} of~$Q$ is the polynomial~$Q^\star=Q_1\dotsm Q_m$.

\paragraph{Structure of the article}
Section~\ref{sec:algos} gives the main properties of the confinement
and of the mixed Hermite-like reduction, leading to the bound on the order of the telescoper.
Section~\ref{sec:degreebounds} gives a bound on the degree of the telescoper and analyzes the evolution of the degrees during the reductions, preparing the complexity analyses in Section~\ref{sec:complexity}.
We conclude the article in Section~\ref{sec:applications} with a few applications of our implementation and experiments on the actual growth of the minimal-order telescopers.

\section{Algorithms and Order Bound}\label{sec:algos}
In this section, we introduce the algorithms with just enough information to prove their correctness and to obtain a bound on the order of the telescoper they compute. A more thorough analysis of the degrees is in the next section.
\subsection{Confinement}\label{sec:confinement}
In terms of integrals, the operation of confinement writes
\[\int_\gamma{P(n,x)\Phi(n,x)\,dx}=\int_\gamma{R(n,x)\Phi(n,x)\,dx},\]
with $R$ a polynomial of degree smaller than $\delta$ from Eq~\eqref{eq:eqdelta}.
This transformation is based on the following lemma.
\begin{lem}\label{lemma:confinement}
Let $\Phi$, $A$ and $B$ be as in Eq.~\eqref{eqn:input}, and $\delta$ as in Eq.~\eqref{eq:eqdelta}. Then for any polynomial $P\in\KK[x]$, there exist unique polynomials $R$ and $Q$ in $\KK[x]$ with $\deg R\le\delta-1$ and $\deg Q\le \deg P-\delta$ such that
\begin{equation}\label{eq:confinement}
P\Phi=R\Phi+(QB\Phi)'.
\end{equation}
\end{lem}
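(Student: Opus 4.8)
The plan is to analyze the map $P \mapsto (QB\Phi)'$ and identify the complementary space in which $R$ lives. First I would expand the right-hand side of~\eqref{eq:confinement}: since $(QB\Phi)' = (QB)'\Phi + QB\Phi' = (QB)'\Phi + QB\,\Phi\,(A/B) = \bigl((QB)' + QA\bigr)\Phi$, dividing by $\Phi$ reduces~\eqref{eq:confinement} to the purely polynomial identity
\begin{equation*}
P = R + (QB)' + QA.
\end{equation*}
So the task is: given $P\in\KK[x]$, find unique $R$ with $\deg R \le \delta-1$ and $Q$ with $\deg Q \le \deg P - \delta$ such that $P - R = (QB)' + QA$. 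Equivalently, writing $\mathcal{D}(Q) := (QB)' + QA = Q'B + Q(B' + A)$, I must show that every $P$ decomposes uniquely as $R + \mathcal{D}(Q)$ with $R$ in the fixed $\delta$-dimensional space $\KK[x]_\delta$ of polynomials of degree $<\delta$.

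The key step is a degree computation on $\mathcal{D}$. The operator $\mathcal{D}(Q) = Q'B + Q(A + B')$ is $\KK$-linear. I would check that if $\deg Q = m \ge 0$ then $\deg \mathcal{D}(Q) = m + \delta$: the term $Q'B$ has degree $(m-1) + \deg_x B$ and the term $Q(A+B')$ has degree $m + \max(\deg_x A, \deg_x B - 1) = m + \delta$ (using $\deg B' \le \deg_x B - 1$), and one must verify the leading term does not cancel. Here one uses the hypothesis $\gcd(A,B)=1$ from~\eqref{eqn:input}: if the leading coefficients conspired to cancel, this would force a relation between $\lc(A)$, $\lc(B)$ and $m$; a cleaner route is to observe that $\deg^\infty$ of $A/B$ being what it is, or simply that when $\deg_x A \ge \deg_x B - 1$ the leading term of $\mathcal{D}(Q)$ comes from $Q\cdot A$ with a nonzero leading coefficient, and when $\deg_x A = \deg_x B - 1$ with the leading terms of $Q'B$ and $QA$ both contributing, the sum of leading coefficients is $(\lc(B))(m\cdot\lc(B)^{-1}\cdot\text{stuff})$... — more carefully, the coefficient is $m\lc(B) + \lc(A)$ when $\deg_x A = \deg_x B - 1$, which could vanish only if $m = -\lc(A)/\lc(B)$ is a specific nonnegative integer. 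This is exactly the \emph{positive integer residue} obstruction: $-\lc(A)/\lc(B)$ is (up to sign) the residue of $A/B$ at infinity, and if it is a positive integer the decomposition degenerates. I expect this case analysis — ensuring the leading term of $\mathcal{D}(Q)$ never vanishes, i.e., that $\mathcal{D}$ strictly raises degree by $\delta$ — to be the main obstacle, and it is likely where the paper either invokes an explicit non-degeneracy argument or silently uses that $\deg_n B = 0$ together with the normalization making $\lc(B) \ne 0$.

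Granting that $\mathcal{D}$ raises degrees by exactly $\delta$, the rest is linear algebra. The images $\mathcal{D}(1), \mathcal{D}(x), \ldots, \mathcal{D}(x^k)$ have degrees exactly $\delta, \delta+1, \ldots, \delta+k$, hence are linearly independent and span a complement to $\KK[x]_\delta$ in $\KK[x]_{\delta+k+1}$ by dimension count. Therefore for $P$ of degree $\le N$ (say $N \ge \delta$, the case $N < \delta$ being trivial with $Q=0$), writing $k = N - \delta$, the space $\KK[x]_{N+1}$ is the direct sum $\KK[x]_\delta \oplus \mathcal{D}(\KK[x]_{k+1})$, so $P$ decomposes uniquely as $R + \mathcal{D}(Q)$ with $\deg R \le \delta - 1$ and $\deg Q \le k = \deg P - \delta$. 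Unwinding, $(QB\Phi)' = \mathcal{D}(Q)\,\Phi$ and~\eqref{eq:confinement} holds; uniqueness of $(R,Q)$ follows from the uniqueness in the direct-sum decomposition (and from $\mathcal{D}$ being injective on $\KK[x]$, since it strictly raises degree). I would present this via a clean induction on $\deg P$: subtract off the appropriate multiple of $\mathcal{D}(x^{\deg P - \delta})$ to kill the leading term of $P$, then recurse.
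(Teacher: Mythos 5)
Your reduction to the polynomial identity $P=R+QA+(QB)'$, the idea that the map $Q\mapsto Q'B+Q(A+B')$ must raise degree by exactly $\delta$, and the concluding linear algebra/induction are exactly the paper's strategy. But the one point you yourself single out as ``the main obstacle'' --- non-vanishing of the leading coefficient in the boundary case $\deg_x A=\deg_x B-1$ --- is left unresolved, and your guesses about how it is handled are wrong. The leading coefficient there is $\lc(Q)\bigl(\lc A+(m+\delta+1)\lc B\bigr)$ (note you dropped the contribution of $B'$, which adds $\deg(B)\lc B$ to your $m\lc B+\lc A$; a minor slip), and neither the normalization $\lc B\neq 0$ (the paper even takes $B$ monic) nor a ``no positive integer residue'' hypothesis can save you: the lemma is stated \emph{without} any residue assumption --- that assumption only enters later, in Proposition~\ref{lem:confinement soundness property}, for minimality --- so an argument that degenerates when $-\lc A/\lc B$ is a positive integer proves a strictly weaker statement than Lemma~\ref{lemma:confinement}.

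The missing idea is that the coefficient field is $\KK=\kk(n)$ and that $A/B$ has the special shape of Eq.~\eqref{eqn:input}, $A/B=nH'/H+S/T$. The paper observes that the case $\deg_x A=\deg_x B-1$ forces $\deg S<\deg T$, writes $H'/H=U/V$ with $\deg U=\deg V-1$, so that $A/B=(nUT+VS)/(VT)$ with $\deg(VS)\le\deg(UT)$; hence $\lc_x A$ is a polynomial of degree $1$ in $n$. Consequently $\lc A+(m+\delta+1)\lc B$ is a nonconstant element of $\kk(n)$ plus an integer, which cannot vanish, and the map is an isomorphism in all cases. Without this use of the $n$-dependence of $A$ your argument has a genuine gap precisely at its pivotal step.
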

Thus all polynomial multiples of $\Phi$ can be written modulo derivatives on a vector space of dimension $\delta$.
\begin{proof}
Equation~\eqref{eq:confinement} rewrites
\begin{equation}\label{eq:confinement2}
QA + (QB)'=P-R.
\end{equation}
Denote $d=\deg(P)-\delta$, and consider the linear map
\[f\ :\ \KK[x]_{d+1}\rightarrow \KK[x]_{d+1}\quad Q\mapsto (QA+(QB)') \operatorname{div}x^\delta,\]
where $u \operatorname{div} v$ denotes the quotient in the Euclidean division of $u$ by $v$. 
For $m\le d$, $f(x^{m})$ has degree at most $m$.
Its coefficient of degree~$m$ equals either
$\lc A\neq0$ if $\delta=\deg A>\deg B-1$, or
$m+\delta+1\neq0$ if $\delta=\deg B-1>\deg A$, 
or $\lc A+m+\delta+1$ if $\deg A=\deg B-1$. For this case to occur, it is necessary that $\deg S<\deg T$. In that case, write $H'/H=U/V$ with $U,V$ polynomials such that $\deg U=\deg V-1$. Then $A/B=(nUT+VS)/(VT)$, and
\[\deg (VS)\le \deg V+\deg T-1=\deg (UT).\]
From this we see that $\lc(A)$ depends on $n$ and we deduce that $f(x^m)$ has degree exactly $m$ in all cases.

Thus $f$ is an isomorphism. It follows that Equation~\eqref{eq:confinement2} is equivalent to $Q$ being the unique polynomial such that $f(Q)=P \operatorname{div}x^\delta$, and then $R$ is $P-QA-(QB)'$. The degrees of $Q$ and $R$ directly follow from the construction.
\end{proof}

\smallskip\par\noindent \text{Algorithm \textsf{Confinement}} implements the proof of Lemma~\ref{lemma:confinement}. Its correctness follows from the fact that the relation used in the loop is obtained by extracting the coefficient of $x^{i+\delta}$ in Eq.~\eqref{eq:confinement2}.

The key to the minimality in Theorem~\ref{thm:main theorem} is the following property of the confinement.

\begin{prop}\label{lem:confinement soundness property}
	 With the notation of \eqref{eqn:input}, further assume that $\Phi'/\Phi$ has no positive integer residue.
	Then, for any polynomial $R\in\KK[x]$ such that $\deg_xR<\delta$ 
	\[\exists K\in\KK(x)\ R\Phi = (K\Phi)' \Leftrightarrow R = 0.\]
\end{prop}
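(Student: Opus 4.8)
The plan is to prove the non-trivial implication ($\Rightarrow$), the converse being immediate with $K=0$. We may assume $\delta\ge 1$, since otherwise the hypothesis $\deg_x R<\delta$ already forces $R=0$. So suppose $R\Phi=(K\Phi)'$ for some $K\in\KK(x)$ with $\deg_x R<\delta$. Dividing by $\Phi$ and using $\Phi'/\Phi=A/B$ from~\eqref{eqn:input} gives the working identity
\[
R \;=\; K' + K\,\frac{A}{B},
\]
from which I will deduce first that $K$ is a polynomial and then that $K=0$, whence $R\Phi=(K\Phi)'=0$ and $R=0$.

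\emph{Step 1: $K\in\KK[x]$.} This is the only point at which the residue hypothesis intervenes. Suppose $K$ had a pole at some $\alpha$ (in an algebraic closure of $\KK$) of order $m\ge 1$, and look at the most singular part at $\alpha$ of the two terms in $R=K'+KA/B$. If $\alpha$ is not a pole of $A/B$, then $K'$ contributes a pole of order $m+1$ and $KA/B$ a pole of order at most $m$, so $R$ would have a pole --- impossible, $R$ being a polynomial. If $\alpha$ is a pole of $A/B$ of order $\ge 2$, then $KA/B$ has a pole of order $\ge m+2>m+1$, again forcing a pole in $R$. Finally, if $\alpha$ is a simple pole of $A/B$ with residue $\rho$, a one-line Laurent computation shows that the coefficient of $(x-\alpha)^{-m-1}$ in $R$ is $(\rho-m)$ times the leading Laurent coefficient of $K$; since $R$ is a polynomial this must vanish, forcing $\rho=m$, a positive integer --- contradicting the assumption that $\Phi'/\Phi=A/B$ has no positive integer residue. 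Hence $K$ has no finite pole, i.e.\ $K\in\KK[x]$.

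\emph{Step 2: degree count.} With $K$ and $R$ polynomials, $KA/B=R-K'$ is a polynomial, so $B\mid KA$ and therefore, as $\gcd(A,B)=1$, $B\mid K$; write $K=BK_1$ with $K_1\in\KK[x]$. Substituting into $R\Phi=(K\Phi)'$ and dividing by $\Phi$ gives
\[
R \;=\; K_1 A + (K_1 B)',
\]
which is the quantity $QA+(QB)'$ of~\eqref{eq:confinement2} with $Q=K_1$. The leading-coefficient computation in the proof of Lemma~\ref{lemma:confinement} shows that if $K_1\ne 0$ then this has $x$-degree exactly $\deg_x K_1+\delta$: writing $k=\deg_x K_1$, the coefficient of $x^{k+\delta}$ equals $\lc(K_1)\lc(A)$, or $(k+\delta+1)\lc(K_1)$, or $\lc(K_1)\big(\lc(A)+k+\delta+1\big)$, according to whether $\deg_x A$ is $>$, $<$, or $=$ to $\deg_x B-1$, and each of these is nonzero --- in the last case because $\lc(A)$ depends on $n$ while $k+\delta+1$ does not, in the middle case because $\kk$ has characteristic $0$. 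Thus $K_1\ne 0$ would give $\deg_x R\ge\delta$, contradicting $\deg_x R<\delta$; hence $K_1=0$, so $K=0$ and $R=0$. The one delicate point in the whole argument is exactly this last nonvanishing in the borderline case $\deg_x A=\deg_x B-1$: since $\deg_x B$ may be as large as $\delta+1$, the degree of $K_1 A+(K_1 B)'$ cannot be read off naïvely, and one needs the fact --- already established in the proof of Lemma~\ref{lemma:confinement} --- that $\lc(A)$ then genuinely depends on $n$, so that no cancellation occurs at the top degree.
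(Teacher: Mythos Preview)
Your proof is correct and follows essentially the same approach as the paper's: rewrite as $R=K'+K\,A/B$, use the residue hypothesis to rule out poles of $K$, then use $B\mid K$ and the degree/leading-coefficient analysis of Lemma~\ref{lemma:confinement} to force $K=0$. The paper is more terse---it cites the uniqueness in Lemma~\ref{lemma:confinement} directly rather than re-doing the leading-coefficient case split---but the underlying argument is the same.
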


\begin{proof}
	Only the direct implication is not obvious. If such a $K$ exists, the equation rewrites
	\[R=K'+K\frac{A}{B}.\]
	If $K$ is a polynomial, this equality can only be satisfied if $B$ divides $K$, in which case the result is a direct consequence of the uniqueness in Lemma~\ref{lemma:confinement}.
	Now assume that $K$ has a pole $x_0$ of order $v > 0$. Then the equation
	shows that $A/B$ must have a simple pole at $x_0$ with residue $v$, which contradicts the assumption on the residues of $\Phi'/\Phi$.
\end{proof}

\begin{algo}
	Algorithm \textsf{Confinement}($P$,$F$) 
	
	\begin{algoenv}{A polynomial $P\in\KK[x]$, a rational function $F=A/B$ with $\gcd(A,B)=1$.}{A polynomial $R\in\KK[x]$ of degree less than $\max(\deg(A),\deg(B)-1)$ such that $P=R+(QB)'+QA$ for some $Q\in\KK[x]$.}
		\State $\delta\gets \max(\deg(A),\deg(B)-1)$;
		\State $d\gets \deg(P)-\delta$;
		\State Write $A=\sum_{i}{a_ix^i}$, $B=\sum_{i}{b_ix^i}$, $P=\sum_{i}{p_ix^i}$;
		\For {$i\gets d$ to $0$ by $-1$}
			\State $c\gets a_\delta + (\delta+i+1)b_{\delta+1}$;
			\State $q_i \gets \frac{1}{c}\left(p_{\delta+i} - \sum_{j=1}^{\delta}{q_{i+j}a_{\delta-j}} \right.$
        \State\qquad\qquad $\left.- (\delta+i+1)\sum_{j=1}^{\delta+1}{q_{i+j}b_{\delta+1-j}}\right)$;
		\EndFor
		\State $Q \gets \sum_{i=0}^d{q_ix^i}$;
		\State \Return $P-(QB)'-QA$.
		
	\end{algoenv}
\end{algo}

\subsection{Hermite Reduction}\label{sec:Hermite}

\begin{algo}
Algorithm \textsf{BasicReduction}($P$,$F$,$G$,$k$)

\begin{algoenv}{A polynomial $P\in\KK[x]$, a rational function $F=A/B\in\KK(x)$, a square-free factor $G$ of $B$ s.t. $\gcd(G,A+B'+iBG'/G )=1$ for all $i\in\ZZ$, a positive integer $k$.}{A polynomial $R\in\KK[x]$ such that $P=G^k(R+q'+Fq)$ for some $q\in\KK[x][G^{-1}]$.}
        \State $R\gets P$;
        \For{$i\gets 1$ to $k$}
          \State $C\gets A+(i-k-1)BG'/G +B'$;
          \State Write $R=QC+VG$ with $\deg Q<\deg G$;
          \State $R\gets (R-Q'B-QC)/G$;\Comment{$\frac{P}{G^k}\Phi=\frac{R}{G^{k-i}}\Phi+(q\Phi)'$}
          \EndFor
    \State \Return $R$.
\end{algoenv}
\end{algo}

In terms of integrals, our Algorithm~\textsf{HermiteReduction} lets one change $H^{n+1}$ into~$H^n$ in the integral,  writing
\[\int_\gamma{P(n,x)H(x)\Phi(n,x)\,dx}=\int_\gamma{\tilde{P}(n,x)\Phi(n,x)\,dx},\]
for some polynomial $\tilde{P}$.
It relies on a sequence of elementary steps (\textsf{BasicReduction}) based on the following lemma.
\begin{lem}\label{lemma:BasicReduction}Let $\Phi$, $A$ and $B$ be as in Eq.~\eqref{eqn:input}. Then for any $G$ dividing $B$ and satisfying $\gcd(G,A+B')=1$, there exist polynomials $Q$ and $R$ in $\KK[x]$ such that
\begin{equation}\label{eq:Hermite-Phi}
P\Phi=RG\Phi+(QB\Phi)'.
\end{equation}
\end{lem}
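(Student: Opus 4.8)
The plan is to convert the functional identity~\eqref{eq:Hermite-Phi} into a purely polynomial one and then to solve it by a single Bézout-type step. Since $\Phi'/\Phi=A/B$, we have $(QB\Phi)'=(QB)'\Phi+QB\,\Phi'=\bigl((QB)'+QA\bigr)\Phi$, so dividing by the nonzero $\Phi$ shows that, for a given polynomial $P\in\KK[x]$, relation~\eqref{eq:Hermite-Phi} is equivalent to the polynomial identity
\[
P \;=\; RG+(QB)'+QA \;=\; RG+Q'B+Q(A+B'),
\]
where $(QB)'=Q'B+QB'$ has been expanded. Thus the task reduces to exhibiting polynomials $Q,R\in\KK[x]$ satisfying this last equation.

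Here I would use the two hypotheses on $G$. Because $G\mid B$, write $B=G\tilde B$ with $\tilde B\in\KK[x]$; the equation becomes $P=G\,(R+Q'\tilde B)+Q(A+B')$, so it suffices to find $Q$ with $G\mid P-Q(A+B')$, after which one sets $R:=(P-Q(A+B'))/G-Q'\tilde B$, which is then a polynomial making the identity hold. Such a $Q$ exists because $\gcd(G,A+B')=1$ makes $A+B'$ invertible in $\KK[x]/(G)$: one takes for $Q$ the unique polynomial of degree less than $\deg G$ with $Q(A+B')\equiv P\pmod G$. This completes the construction, and the lemma follows.

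This is an existence statement with a one-step proof, so I do not anticipate a genuine difficulty; the only non-formal points are that $B/G$ is a polynomial (which is precisely $G\mid B$) and that $A+B'$ is invertible modulo $G$ (which is precisely $\gcd(G,A+B')=1$). What is genuinely delicate---but belongs to the analysis of \textsf{BasicReduction} rather than to this lemma---is showing that iterating such steps strictly lowers the multiplicity of $G$ in the denominator; that is where one needs coprimality of $G$ not merely with $A+B'$ but with each shifted quantity $A+B'+iBG'/G$ for $i\in\ZZ$, so that the reduction can be repeated $k$ times when $G^k\mid B$.
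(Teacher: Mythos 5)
Your proof is correct and follows essentially the same route as the paper: both reduce~\eqref{eq:Hermite-Phi} to the polynomial identity $P=RG+Q'B+Q(A+B')$, use the coprimality $\gcd(G,A+B')=1$ to choose $Q$ with $Q(A+B')\equiv P \pmod G$ (the paper phrases this as a B\'ezout relation $P=UG+Q(A+B')$), and then read off $R=(P-Q(A+B'))/G-Q'B/G$, which is a polynomial because $G\mid B$. Your closing remark correctly places the multiplicity-lowering iteration (and the shifted coprimality conditions from Lemma~\ref{lemma:A+B'}) in the analysis of \textsf{BasicReduction} rather than in this lemma.
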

\begin{proof}
The hypothesis $\gcd(G,A+B')=1$ implies the existence of polynomials $U,Q\in\KK[x]$ such that $P=UG+Q(A+B')$.
Then the derivative $(QB\Phi)'$ expands as
\[
\frac{(QB\Phi)'}{\Phi}=Q'B+Q(A+B')=(Q'B/G-U)G+P,
\]
which has exactly the form of Equation~\eqref{eq:Hermite-Phi}.
\end{proof}
The crucial condition $\gcd(G,A+B')=1$ required to apply this lemma does not hold for arbitrary $A$ and $B$ and divisor $G$ of $B$, but when $G$ is square-free, it is a consequence of the presence of $n$ in Eq.~\eqref{eqn:input}, as shown in the following.
\begin{lem}\label{lemma:A+B'}
Let $\Phi$, $A$ and $B$ be as in Eq.~\eqref{eqn:input}. Then for any square-free polynomial $G$ in $\kk[x]$ dividing the denominator of~$H$, $\gcd(G, A+B')=1$.
This is also true if $A/B$ is replaced by the reduced form of $A/B+iG'/G$ for some $i\in\ZZ$.
\end{lem}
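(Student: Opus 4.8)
The plan is to analyze the behavior of $A+B'$ at each root of $G$ separately, using that $G$ is square-free and divides the denominator of $H$. Write $H = H_1/H_2$ in lowest terms, so $G \mid H_2$, and recall from~\eqref{eqn:input} that
\[
\frac{A}{B} = n\frac{H'}{H} + \frac{S}{T} = n\frac{H_1'H_2 - H_1H_2'}{H_1H_2} + \frac{S}{T}.
\]
First I would observe that $A+B'$ is, up to the factor $B$, the logarithmic derivative of $B\cdot\Phi$-type data; more precisely, if $x_0$ is a root of $G$, then $x_0$ is a pole of $A/B$ and hence a simple pole (since $B$, having $G$ as a square-free factor coming from $H_2$, contributes order $1$ at $x_0$ in the relevant part, and one must check $T$ does not raise the order — but even if it does, we localize at $x_0$). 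The key computation is the residue of $A/B$ at such a pole. Near $x_0$ with $B = (x-x_0)\tilde B$, $\tilde B(x_0)\neq 0$, we have $A(x_0) + B'(x_0) = A(x_0) + \tilde B(x_0)$, and the condition $\gcd(G,A+B')=1$ is equivalent to $A(x_0)+\tilde B(x_0)\neq 0$ for every root $x_0$ of $G$, i.e. to the residue of $A/B$ at $x_0$ being different from $-1$.

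Second, I would compute that residue explicitly. The contribution of $n H'/H$ to the residue of $A/B$ at a root $x_0$ of $H_2$ is $-n\cdot\operatorname{mult}_{x_0}(H_2) = -n$ (since $H_2$ is square-free at $x_0$ — here I use that $G$ square-free and $G\mid H_2$ forces $x_0$ to be a simple root of $H_2$), while the contribution of $S/T$ is some element $\rho(x_0)\in\kk$ independent of $n$. Hence the residue of $A/B$ at $x_0$ equals $-n + \rho(x_0)$, which is a nonconstant polynomial in $n$ and therefore cannot equal the constant $-1$; thus $A+B'$ does not vanish at $x_0$, giving $\gcd(G,A+B')=1$. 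The one subtlety is the possibility that $T$ also vanishes at $x_0$, making the $S/T$ part contribute to the pole order; but then the residue is still of the form $-n + (\text{element of }\kk)$ only if the pole stays simple, and if the pole of $S/T$ at $x_0$ has order $\ge 2$ then $A/B$ has a pole of order $\ge 2$ at $x_0$, in which case $(x-x_0)^2 \mid B$ while $(x-x_0)\mid G$ with $G$ square-free — this is fine, the argument needs $A(x_0)+\tilde B(x_0)\neq 0$ where now $\tilde B(x_0)=0$, so we need $A(x_0)\neq 0$; re-localizing, $A = n(H_1'H_2-H_1H_2')\tfrac{T}{H_1H_2} + S\tfrac{H_1H_2}{\cdot}$ after clearing, and the coefficient of $n$ in $A(x_0)$ is nonzero exactly because $H_2$ is square-free at $x_0$ so $H_1'H_2 - H_1H_2'$ does not vanish there up to the cleared denominator. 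In all cases $A(x_0)+B'(x_0)$ is a nonconstant polynomial in $n$, hence nonzero.

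Third, for the final sentence: replacing $A/B$ by the reduced form of $A/B + iG'/G$ shifts the residue at each root $x_0$ of $G$ by exactly $i\in\ZZ$ (since $G$ is square-free, $G'/G$ has residue $1$ at $x_0$), so the residue becomes $-n + \rho(x_0) + i$, still a nonconstant polynomial in $n$, and the same reduced denominator still contains $G$ as a square-free factor; so the identical argument applies. I expect the main obstacle to be the bookkeeping when $T$ and $H_2$ share the root $x_0$ and one must track pole orders carefully — but the robust point is simply that the residue, or more precisely the relevant leading value $A(x_0)+B'(x_0)$, always carries a genuine $n$-dependence inherited from the $nH'/H$ term, which a constant like $-1$ or an integer shift cannot cancel.
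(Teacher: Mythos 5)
Your overall strategy---localize at each root $x_0$ of $G$ and show that $(A+B')(n,x_0)$ cannot vanish identically in $n$---is in the simple-pole case essentially the paper's argument in analytic clothing (the paper reduces $A+B'$ modulo an irreducible factor $G$ and uses that the coefficient $nk+\nu$ cannot be cancelled). But two of your supporting claims are false, and one of them leaves a real gap. First, ``$G$ square-free and $G\mid H_2$ forces $x_0$ to be a simple root of $H_2$'' is wrong: take $H_2=(x-1)^2$ and $G=x-1$. This situation is not exotic --- \textsf{HermiteReduction} invokes the lemma precisely for square-free factors $g_k$ of the denominator of $H$ having multiplicity $k>1$. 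In the simple-pole case the slip is harmless, since the residue is $-nm+\rho(x_0)$ with $m\ge 1$, still nonconstant in $n$.

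The genuine gap is the case where the pole of $A/B$ at $x_0$ has order $\nu\ge 2$ (equivalently, $T$ vanishes to order $\ge 2$ at $x_0$). There $B'(x_0)=0$, so you need $A(n,x_0)\neq 0$; but your justification fails: the coefficient of $n$ in $A$ is $B\cdot H'/H$, and since $H'/H$ has only a \emph{simple} pole at $x_0$ while $v_{x_0}(B)=\nu\ge 2$, this coefficient vanishes at $x_0$. So $A(n,x_0)+B'(x_0)$ is a nonzero \emph{constant} in $n$, and your concluding claim ``in all cases $A(x_0)+B'(x_0)$ is a nonconstant polynomial in $n$'' is false; the $n$-dependence cannot carry this case. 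The nonvanishing must instead come from coprimality: either directly from $\gcd(A,B)=1$ (a multiple root of $B$ cannot be a root of $A$, and $B'(x_0)=0$ there), or, as in the paper, from $\gcd(S,T)=1$: reducing modulo $G$ kills the $n$-term and leaves $A+B'\equiv S\,G^{\nu}\tilde B/T \bmod G$, so $G\mid A+B'$ would force $G\mid S$, contradicting $G\mid T$. The same repair is needed for the shifted fractions $A/B+iG'/G$: adding $iG'/G$ only perturbs the simple-pole part, so at a pole of order $\ge 2$ it is again coprimality of the reduced numerator and denominator (not the residue shift by $i$) that closes the argument.
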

\begin{proof}
If $H'=0$ then the denominator of $H$ is~1 and then $G=1$ and the property holds.

Otherwise, let first $G$ be an irreducible factor of the denominator of $H$, so that there exist an integer $k$ and polynomials $H_1,H_2\in\kk[x]$ such that
\[\frac{A}{B}=nk\frac{G'}{G}+n\frac{H_1}{H_2}+\frac{S}{T},\]
with $\gcd(G,H_2)=1$ and $B=\operatorname{lcm}(G,H_2,T)$. Write $B=G^\nu\tilde{B}$ with $\gcd(G,\tilde{B})=1$. Then 
\[A=nkG'G^{\nu-1}\tilde{B}+nH_1G^\nu({\tilde{B}}/{H_2})+S({G^\nu\tilde{B}}/{T}).\]
Reducing $A+B'$ modulo $G$ then yields
\begin{equation}\label{eq:A+B'}
A+B'\equiv (nk+\nu)G'G^{\nu-1}\tilde{B}+S({G^\nu\tilde{B}}/{T}) \bmod G.
\end{equation}
Since $G$ does not depend on $n$, $G\cmid A+B'$ would imply that both $G\cmid G'G^{\nu-1}\tilde{B}$ and $G\cmid SG^\nu\tilde{B}/T$. The first condition implies $\nu>1$, which forces that $G^\nu\cmid T$, making $G\cmid S$ necessary too, a contradiction. 
This reasoning also proves the result when adding integer multiples of $G'/G$ to the fraction $A/B$, which adds an integer to $nk+\nu$ in Eq.~\eqref{eq:A+B'}.

If $G$ is only assumed to be a square-free divisor of the denominator of $H$, then the property holds for each of its irreducible factors, and thus $A+B'$ is invertible modulo their product~$G$ by the Chinese remainder theorem.
\end{proof}

\begin{algo}
  Algorithm \textsf{HermiteReduction}($P$,$H$,$S/T$)
  
  \begin{algoenv}{A polynomial $P\in\kk(n)[x]$,\\ two rational functions $H$ and $S/T$ in $\kk(x)$.}{A polynomial $R\in\kk(n)[x]$ such that $P=R/H+nQH'/H+QS/T+Q'$ for some $Q\in\kk(n,x)$.}
    \State Compute the square-free decomposition of the denominator of $H$: $g=g_1g_2^2\dots g_m^m$;
    \State Compute the corresponding partial fraction decomposition: $H=U+\sum_{k=1}^m{U_k/g_k^k}$;
    \State $R\gets PU$;
    \State $K\gets S/T+(n-1){H'}/{H}$;
      \For{$k\gets 1$ to $m$}
      \State $r_k\gets\textsf{BasicReduction}(PU_k,K,g_k,k)$;
      \EndFor
    \State\Return $R+r_1+\dots+r_m$.
  \end{algoenv}
\end{algo}

\smallskip\par\noindent \textbf{Correctness of} \textsf{BasicReduction} \textbf{and} \textsf{HermiteReduction.}
Algorithm~\textsf{HermiteReduction} treats each square-free factor of the denominator of $H$ separately, while
the second part of the lemma is used in Algorithm~\textsf{BasicReduction} to deal with multiplicities. If $G$ is a square-free factor of multiplicity $k$, then Lemma~\ref{lemma:BasicReduction} is used successively with $A/B$ the reduced form of the logarithmic derivatives of $\Phi/G^k,\Phi/G^{k-1},\dots,\Phi/G$, thus rewriting $P\Phi/G^k$ as $R\Phi$ up to a derivative. 

\subsection{Mixed Creative Telescoping}\label{sec:mixedCT}
Combining confinement and Hermite reduction gives the final result.

\begin{thm}\label{thm:mixedct-order} Let $P$ be a polynomial in $\kk[n,x]$ and ${\Phi}$, $A$ and $B$ as in Eq.~\eqref{eqn:input} and $\delta$ as in Eq.~\eqref{eq:eqdelta}. Then, $F_n=P(n,x){\Phi}$ admits a telescoper of order bounded by $\delta$.
\end{thm}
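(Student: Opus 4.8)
The plan is to combine the two reduction procedures established above into a single rewriting that confines all the shifted terms $F_{n+i} = P(n+i,x)\Phi(n+i,x)$, for $i = 0,\dots,\delta$, into one $\delta$-dimensional $\kk(n)$-vector space, and then invoke linear algebra. Throughout I would work over $\KK = \kk(n)$, so that "polynomial" means element of $\kk(n)[x]$ and all the earlier lemmas apply verbatim.

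First I would fix notation for the shift: write $\Phi(n+i,x) = H(x)^{i}\,\Phi(n,x)\cdot E_i$ where the only genuine dependence that matters is the factor $H^i$, since replacing $n$ by $n+i$ in $\tfrac{\Phi'}{\Phi} = n\tfrac{H'}{H} + \tfrac{S}{T}$ just adds $i\,\tfrac{H'}{H}$, i.e.\ multiplies $\Phi$ by $H^i$ up to a constant in $x$. Concretely, for each $i$ I want to rewrite $F_{n+i}$, modulo an exact derivative $(\text{something}\cdot\Phi)'$, as $R_i(n,x)\,\Phi(n,x)$ for a polynomial $R_i$ of degree at most $\delta - 1$ in $x$. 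I would do this by peeling off the power of $H$ one unit at a time: starting from $P(n+i,x)\,H^i\,\Phi$, apply Algorithm \textsf{HermiteReduction} (whose correctness rests on Lemmas~\ref{lemma:BasicReduction} and \ref{lemma:A+B'}, the latter guaranteeing the needed gcd condition precisely because $n$ occurs in~\eqref{eqn:input}) to trade one factor of $H$ for a modification of the numerator polynomial plus a derivative; iterate $i$ times to kill all of $H^i$; then apply Algorithm \textsf{Confinement} (Lemma~\ref{lemma:confinement}) to bring the resulting polynomial numerator down to degree $< \delta$. The key structural point is that each Hermite step and the final confinement step only ever add an exact derivative of the form $(\ast\,\Phi)'$, so after all the rewriting we have, for each $i$,
\[
F_{n+i}(x) = R_i(n,x)\,\Phi(n,x) + \frac{d}{dx}\bigl(\Gamma_i(n,x)\,\Phi(n,x)\bigr),
\]
with $R_i \in \kk(n)[x]$ of degree $\le \delta - 1$ in $x$ and $\Gamma_i \in \kk(n,x)$.

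The $R_i$ for $i = 0,\dots,\delta$ are $\delta + 1$ vectors in the $\kk(n)$-vector space $\kk(n)[x]_{\delta}$, which has dimension $\delta$. Hence they are $\kk(n)$-linearly dependent: there exist $\tilde c_0(n),\dots,\tilde c_\delta(n) \in \kk(n)$, not all zero, with $\sum_{i=0}^\delta \tilde c_i(n) R_i(n,x) = 0$. Clearing denominators in $n$ gives polynomials $c_i(n) \in \kk[n]$, not all zero, and then
\[
\sum_{i=0}^\delta c_i(n)\,F_{n+i}(x) = \sum_{i=0}^\delta c_i(n)\,\frac{d}{dx}\bigl(\Gamma_i(n,x)\,\Phi(n,x)\bigr) = \frac{d}{dx}\Bigl(\Bigl(\textstyle\sum_i c_i(n)\Gamma_i(n,x)\Bigr)\Phi(n,x)\Bigr),
\]
which is exactly a telescoping relation of the form~\eqref{eqn:prob-ct} with $G_n = Q(n,x)F_n$ where $Q = \Phi(n,x)^{-1}\sum_i c_i(n)\Gamma_i(n,x)\cdot$(correction for the $H^i$ shift); more precisely one absorbs the $H$-power bookkeeping so that $G_n$ is a rational multiple of $F_n$. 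This exhibits a telescoper $L = \sum_{i=0}^\delta c_i(n) S_n^i$ of order $\le \delta$.

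The main obstacle — and the place where care is genuinely needed rather than routine — is the bookkeeping that the Hermite reduction applied to $P(n+i,x)H^i\Phi$ really does produce a certificate of the form $(\ast\cdot\Phi(n,x))'$ and not merely $(\ast\cdot\Phi(n+i,x))'$ or $(\ast\cdot H^j\Phi)'$ with leftover $H$-powers in a way that breaks the "$R_i\Phi(n,x)$" normal form; one must check that peeling $H$'s off one at a time with the shifted logarithmic derivatives $K = S/T + (n-1)H'/H$ used in \textsf{HermiteReduction} matches the successive reductions of $\Phi/G^k, \Phi/G^{k-1},\dots$ described after that algorithm, and that the gcd hypothesis of Lemma~\ref{lemma:A+B'} survives each such shift (it does, since that lemma's statement already covers adding integer multiples of $G'/G$). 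A secondary point is to make sure that, when the decomposition is minimal (no positive integer residue of $\Phi'/\Phi$), the telescoper found this way is of \emph{minimal} order — but for the present theorem, which only claims the order bound $\delta$, the linear-algebra argument above suffices and minimality is deferred to the separate discussion invoking Proposition~\ref{lem:confinement soundness property}.
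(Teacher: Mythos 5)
Your proposal is correct and follows essentially the same route as the paper: iterate \textsf{HermiteReduction} (justified by Lemmas~\ref{lemma:BasicReduction} and~\ref{lemma:A+B'}) to strip the $H^i$ factor from each shift $F_{n+i}$, confine via Lemma~\ref{lemma:confinement} to the $\delta$-dimensional space $\kk(n)[x]_\delta$, and conclude by linear dependence of the $\delta+1$ reduced forms. The extra bookkeeping you flag about certificates staying of the form $(\ast\,\Phi(n,x))'$ is exactly what the paper's iterative "shift then reduce" scheme handles, so no gap remains.
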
 
\begin{proof}
By Lemma~\ref{lemma:BasicReduction}, Algorithm~\textsf{HermiteReduction} can be used to rewrite all the shifts $F_n,F_{n+1},F_{n+2},\dotsc$ under the form $R\Phi$ modulo derivatives, with $R\in\kk(n)[x]$. The necessary condition to apply the algorithm is satisfied at each step thanks to Lemma~\ref{lemma:A+B'}. By Lemma~\ref{lemma:confinement}, $F_n,F_{n+1},\dots,F_{n+\delta}$ are linearly dependent modulo derivatives. A linear relation between them provides a telescoper of order at most $\delta$.
\end{proof}

\begin{algo}
  Algorithm \textsf{MixedCT}($P,H,S/T$)
  
  \begin{algoenv}{A polynomial $P\in\kk[n,x]$,\\ two rational functions $H$ and $S/T$ in $\kk(x)$.}{
  A $r$-tuple $(c_0,\dots,c_{r-1})$ such that $P({n+r},x)H^{r}-\sum_{i=0}^{r-1}c_{i}P(n+i,x)H^{i}=nQH'/H+QS/T+Q'$ for some $Q\in\kk(n,x)$.}
    \State $F\gets S/T+(n-1)H'/H$;
    \State $R_0\gets\textsf{Confinement}(P\mid_{n\mapsto n-1},F)$;
    \For {$k\gets 0,\dots$}
      \If {$\operatorname{rank}_{\kk(n)}(R_0,R_1,\dots,R_k)<k+1$}
        \State Solve $\sum_{i=0}^{k-1}c_iR_i=R_k$ for $c_0,\dots,c_{k-1}$ in $\kk(n)$;
        \State \Return $(c_0,\dots,c_{k-1})$. 
      \EndIf  
      \State $P\gets R_{k}\mid_{n\mapsto n+1}$;
      \State $P\gets\textsf{HermiteReduction}(P,H,S/T)$;
      \State $R_{k+1}\gets\textsf{Confinement}(P,F)$;
    \EndFor
  \end{algoenv}
\end{algo}

Algorithm~\textsf{MixedCT} implements that proof. In practice, for efficiency purposes, the reduction of $F_{n+i}$ is obtained by applying \textsf{HermiteReduction} to the shift of the confined reduction of $F_{n+i-1}$.

\begin{thm}
	With the notation of Theorem~\ref{thm:mixedct-order}, further assume that $\Phi'/\Phi$ does not have any positive integer residue.
	Then Algorithm~\textsf{\emph{MixedCT}}($P$,$H$,$S/T$) computes a minimal-order telescoper for $P\Phi$.
\end{thm}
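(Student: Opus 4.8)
The plan is to show that the order $r$ of the telescoper returned by \textsf{MixedCT} is minimal by establishing that the $\kk(n)$-vector space spanned by the confined reductions $R_0, R_1, \dots$ faithfully records linear dependence modulo derivatives of the shifts $F_n, F_{n+1}, \dots$ — under the extra hypothesis that $\Phi'/\Phi$ has no positive integer residue. The key observation is that \textsf{HermiteReduction} and \textsf{Confinement} together produce, for each $i$, a polynomial $R_i\in\kk(n)[x]$ with $\deg_x R_i<\delta$ and a rational function $\Gamma_i\in\kk(n,x)$ such that $F_{n+i}=R_i\,\Phi + \Gamma_i'$ (all certificates in the algorithm are rational in $x$, with denominators built from $H$ and $B$). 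So any linear relation $\sum_i c_i(n) R_i = 0$ in $\kk(n)[x]$ immediately yields $\sum_i c_i(n) F_{n+i} = \bigl(\sum_i c_i(n)\Gamma_i\bigr)'$, i.e. a telescoper of that order; this is exactly what the rank test in \textsf{MixedCT} detects, and it returns the first $k$ for which the $R_0,\dots,R_k$ become dependent.

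Next I would prove the converse: if $L=\sum_{i=0}^{r-1} c_i(n) S_n^i - S_n^{r}$ (say) is \emph{any} telescoper of order $r$, so that $\sum_{i} c_i(n) F_{n+i} = (K\Phi)'$ for some $K\in\kk(n,x)$ — here one must note that the certificate is necessarily a rational multiple of $\Phi$ and that $K$ is rational in $x$, which is standard for hyperexponential-hypergeometric terms — then substituting $F_{n+i}=R_i\Phi+\Gamma_i'$ gives $\bigl(\sum_i c_i(n) R_i\bigr)\Phi = \bigl(K\Phi - \sum_i c_i(n)\Gamma_i\bigr)' = (\tilde K\Phi)'$ for some $\tilde K\in\kk(n,x)$. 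Now the polynomial $R:=\sum_i c_i(n) R_i$ has $\deg_x R<\delta$, and $\Phi'/\Phi$ has no positive integer residue, so Proposition~\ref{lem:confinement soundness property} forces $R=0$. Hence every telescoper of order $r$ induces a genuine $\kk(n)$-linear dependence among $R_0,\dots,R_{r-1},R_r$ (more precisely among $R_0,\dots,R_{r}$ since we included the top shift). Consequently no telescoper can have order smaller than the index $k$ at which \textsf{MixedCT} first detects a rank drop, which is precisely the order of the telescoper it outputs; therefore that output is of minimal order.

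One point that needs a little care is matching indices and shifts between the abstract argument and the actual bookkeeping in \textsf{MixedCT}: the algorithm computes $R_0$ from $P|_{n\mapsto n-1}$ with $F=S/T+(n-1)H'/H$, so a short unwinding shows $R_0\,\Phi(n,x) = P(n,x)\Phi(n,x) + \Gamma_0'$ modulo derivatives, i.e. $R_0$ indeed represents $F_n$; and each iteration, via the shift $P\gets R_k|_{n\mapsto n+1}$ followed by \textsf{HermiteReduction} and \textsf{Confinement}, maintains the invariant that $R_k$ represents $F_{n+k}$ modulo derivatives. Granting this invariant — which is just the correctness of the subroutines already established, propagated through the substitution $n\mapsto n+1$ — the two implications above show that the span of $\{R_0,\dots,R_k\}$ has $\kk(n)$-dimension equal to $k+1$ exactly as long as $F_n,\dots,F_{n+k}$ are independent modulo derivatives, and drops precisely when the first telescoper appears.

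The main obstacle I anticipate is justifying the \emph{rationality and $\Phi$-proportionality of the certificate} for an arbitrary telescoper: the argument in the previous paragraph needs that any $K$ with $\sum_i c_i(n) F_{n+i}=(K\Phi)'$ can be taken with $K\in\kk(n,x)$ so that $\tilde K\Phi$ stays of the form (rational in $x$) $\times\,\Phi$ and Proposition~\ref{lem:confinement soundness property} applies. This is where the hyperexponential-hypergeometric structure of $F_n$ is essential: since all the $F_{n+i}$ are $\kk(n,x)$-multiples of a single $\Phi$, their $\kk(n)$-span modulo $\kk(n,x)$-derivatives is the same as the $\kk(n)$-span of the corresponding rational coefficients modulo the image of the twisted derivation $g\mapsto g'+ (A/B)g$, and a primitive in $\overline{\kk(n)(x)}$ of a $\kk(n,x)$-multiple of $\Phi$ — if it exists — is again a $\kk(n,x)$-multiple of $\Phi$ (by the usual uniqueness-of-hyperexponential-primitive argument, the ratio being both $\kk(n)$-algebraic and having a rational logarithmic derivative). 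Once that reduction is in place, everything else is the linear-algebra bookkeeping sketched above.
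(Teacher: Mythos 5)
Your proposal is correct and takes essentially the same route as the paper: reduce an arbitrary (minimal-order) telescoper through the algorithm's invariant $F_{n+i}=R_i\Phi+\Gamma_i'$ to obtain $R\Phi=(\tilde K\Phi)'$ with $\deg_x R<\delta$, invoke Proposition~\ref{lem:confinement soundness property} to force $R=0$, and conclude that the rank test in \textsf{MixedCT} fires exactly at the minimal order. The only divergence is your final worry about the rationality and $\Phi$-proportionality of the certificate, which is moot here: by the paper's definition of a telescoper in Eq.~\eqref{eqn:prob-ct}, the certificate is $Q(n,x)F_n(x)$ with $Q$ rational, hence already a $\kk(n,x)$-multiple of $\Phi$, so no argument about hyperexponential primitives is needed.
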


\begin{proof}
	Consider the minimal-order monic telescoper 
	\[L(n,S_n)=S_n^r-\sum_{i=0}^{r-1}c_i(n)S_n^i\]
	of $P\Phi$ and its certificate $C\in\kk(n,x)$ such that \[L(n,S_n)\cdot(P\Phi)=(C\Phi)'.\]
By Lemma~\ref{lemma:confinement}, there exist $R$ and $K$ satisfying \[L(n,S_n)\cdot(P\Phi)=R\Phi+(K\Phi)',\] where $K\in\kk(n,x)$ 
and
	\[R=R_r-\sum_{i=0}^{r-1}{c_i(n)R_i}\]
	is a linear combination of the $R_i$'s of Algorithm \textsf{MixedCT}. It follows that $R\Phi=((C-K)\Phi)'$ and Proposition~\ref{lem:confinement soundness property} then implies that $R=0$. Thus, this linear combination is detected by the algorithm, producing the output $(c_0,\dots,c_{r-1})$.
\end{proof}

\paragraph{Certificates} Algorithm~\textsf{MixedCT} as given here computes the certificate, although not in a normalized form. We chose to only output the telescoper, but it would be possible to return the certificate as well (and normalize it or not).

\section{Degree Bounds}\label{sec:degreebounds}

We now review more precisely the algorithms and obtain bounds on the degrees at each step. The first part of this section consists of technical results that are needed for the complexity analysis in the next section. Then, at the end of section~\ref{sec:mixedct-bound-deg}, we give a bound on the degree of the telescoper produced by Algorithm~\textsf{MixedCT}.

\subsection{Confinement}
\begin{lem}\label{lem:bound-Confinement}
Let $A$ and $B$ be as in Eq.~\eqref{eqn:input} and let $P$ be a polynomial in $\kk(n)[x]$. Let $R$ be the polynomial returned by Algorithm~\textsf{\emph{Confinement}}. Then $\Rdeg R-\Rdeg P$ is at most
\[\begin{cases}(\deg_xP-\deg_xA+1)\frac{[1]}{[1]},\quad&\text{if $\deg_x B\le\deg_xA+1$,}\\
{\left[\left\lfloor\frac{\deg_xP-\deg_xB+1}{\deg_xB-\deg_xA-1}\right\rfloor+1\right]}/{[0]}&\text{otherwise.}
\end{cases}\]
\end{lem}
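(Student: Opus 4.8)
The plan is to run Algorithm~\textsf{Confinement} symbolically and follow how the degree in~$n$ of numerators and denominators grows along the downward loop that computes $q_d,q_{d-1},\dots,q_0$, where $d=\deg_xP-\delta$. First I would put $P$ into single--fraction form $P=\hat P/D$ with $\hat P\in\kk[n,x]$, $D\in\kk[n]$ coprime, set $u=\deg_n\hat P$ and $v=\deg_nD$ (so $\Rdeg P=[u]/[v]$), and record the two structural facts coming from~\eqref{eqn:input}: since $\Phi'/\Phi=nH'/H+S/T$ with $H,S,T$ free of~$n$, its numerator~$A$ and denominator~$B$ satisfy $\deg_nB=0$ and $\deg_nA\le1$. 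Hence in the recurrence of the algorithm every coefficient multiplying a previously computed $q_{i+j}$ has $n$-degree $\le1$ (those from~$A$) or $0$ (those from~$B$), and the only division performed is by $c=a_\delta+(\delta+i+1)\,b_{\delta+1}$. I would use repeatedly that $\deg^{\infty}_n(fg)=\deg^{\infty}_nf+\deg^{\infty}_ng$ and $\deg^{\infty}_n(f+g)\le\max(\deg^{\infty}_nf,\deg^{\infty}_ng)$.

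\textbf{Case $\delta=\deg_xB-1$} (second branch of the statement). Then $\deg_xA<\delta$, so $a_\delta=0$ and $c=(\delta+i+1)\lc(B)$ is a nonzero scalar: no denominator in~$n$ is ever created, and every $q_i$ keeps~$D$ as its $n$-denominator. Moreover $a_{\delta-j}=0$ for $j<e:=\delta-\deg_xA=\deg_xB-\deg_xA-1$, so an increase of $\deg^{\infty}_n$ by~$1$ in the recurrence (produced by a coefficient of~$A$) can only come with a backward jump of at least~$e$ in the index. A short downward induction on~$i$ then gives $\deg^{\infty}_nq_i\le(u-v)+\lfloor(d-i)/e\rfloor$. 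Writing $\hat Q:=DQ\in\kk[n,x]$, which has $\deg_n\hat Q=v+\max_i\deg^{\infty}_nq_i\le u+\lfloor d/e\rfloor$, the relation $P=R+(QB)'+QA$ of Lemma~\ref{lemma:confinement} multiplied by~$D$, namely $DR=\hat P-(\hat QB)'-\hat QA$, together with $\deg_nA\le1$, yields $\deg_n(DR)\le u+\lfloor d/e\rfloor+1$ while $R$ still has $n$-denominator dividing~$D$. As $d=\deg_xP-\deg_xB+1$ and $e=\deg_xB-\deg_xA-1$, this is the announced bound $[\,\lfloor(\deg_xP-\deg_xB+1)/(\deg_xB-\deg_xA-1)\rfloor+1\,]/[0]$.

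\textbf{Case $\delta=\deg_xA$} (first branch). Now $c$ may have $n$-degree~$1$, so each of the $d+1$ divisions can enlarge the $n$-denominator by one unit of degree. Let $m_i$ be the number of indices~$j$ with $i\le j\le d$ and $\deg_nc_j=1$. I would prove, by a joint downward induction on~$i$, that (i)~the $n$-denominator of~$q_i$ divides $D$ times the product of those degree-$1$ factors~$c_j$, hence has $n$-degree $\le v+m_i$, and (ii)~$\deg^{\infty}_nq_i\le(u-v)+(d-i+1-m_i)-1$. The key to~(ii) is a lockstep effect: whenever $\deg_nc_i=1$ the division by~$c_i$ lowers $\deg^{\infty}_n$ by exactly~$1$, which cancels the possible $+1$ coming from the (degree-$\le1$-in-$n$) coefficients of~$A$; when $\deg_nc_i=0$ there is no denominator gain, so the two estimates remain in balance. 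Adding (i) and~(ii) gives $\deg_n(\text{numerator of }q_i)\le u+(d-i)$. Clearing the $q_i$ over the common denominator $D^{\ast}:=D\prod_{\deg_nc_j=1}c_j$, of $n$-degree $\le v+d+1$, one obtains $\hat Q:=D^{\ast}Q\in\kk[n,x]$ with $\deg_n\hat Q\le u+d$; then $D^{\ast}R=(D^{\ast}/D)\hat P-(\hat QB)'-\hat QA$ and $\deg_nA\le1$ give $\deg_n(D^{\ast}R)\le u+d+1$. Hence $\Rdeg R-\Rdeg P\le[d+1]/[d+1]=(\deg_xP-\deg_xA+1)\,[1]/[1]$.

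\textbf{Main obstacle.} The delicate point is part~(ii) of the case $\delta=\deg_xA$: one must see that a leading coefficient~$c$ of the confinement system of $n$-degree~$1$ simultaneously forces a fresh linear factor into the denominator \emph{and} kills the matching rise of the numerator, so that $\Rdeg$ grows along the single line $[1]/[1]$ instead of over a wider region; the parallel remark that $c$ is a scalar---hence no denominator growth---in the case $\delta=\deg_xB-1$ also has to be noticed. Everything else is routine bookkeeping of $\deg_n$ and $\deg^{\infty}_n$ through the sums and products of the recurrence, plus the elementary index-jump counting that yields the floor in the second bound.
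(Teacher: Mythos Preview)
Your proof is correct and follows the same route as the paper: a case-by-case tracking of the growth of the $n$-degrees of numerator and denominator along the downward recurrence for $q_d,\dots,q_0$, then passing to $R$ via $R=P-QA-(QB)'$. The one difference is that in the case $\delta=\deg_xA$ the paper does not introduce your lockstep $m_i$-accounting; it simply notes that each step raises the numerator and denominator $n$-degree by at most~$1$ (since $c$ and all $a_{\delta-j}$ have $n$-degree $\le1$), obtaining $\Rdeg Q-\Rdeg P\le[d]/[d+1]$ directly and hence the same $[d{+}1]/[d{+}1]$ for~$R$---so your refinement, while sound, is not needed for the stated bound.
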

\begin{proof} 
The proof is a case by case analysis of Algorithm~\textsf{Confinement}; we use its notation.

When $\delta = \deg_xA$, the recurrence for $q_i$ has a summand $a_{\delta-1}q_{i+1}$ except when $i=d$, while $c$ has $a_\delta$ for summand. Thus by induction, the degree in $n$ of the numerator and denominator of $q_{d-i}$ increase by $1$ at each step. Since there are $d+1$ steps, $\Rdeg Q-\Rdeg P$ is bounded by $[d]/[d+1]$. The result for $R$ then follows from Equation~\eqref{eq:confinement2}.

When $\delta = \deg_xB-1$, the coefficients $a_{\delta-j}$ are zero for $j< \delta-\deg_xA$ and $c$ has degree $0$ in $n$. By induction on $i$, $q_{d-i}$ has degree that changes (by increases of $1$) only when $i\equiv 0 \mod{\delta-\deg_xA}$. More explicitly, $\Rdeg q_{d-i}-\Rdeg P$ is bounded by $[\lfloor i/(\delta-\deg_xA)\rfloor]/[0]$. Again, the conclusion for~$R$ follows from Equation~\eqref{eq:confinement2}.
\end{proof}

\subsection{Hermite Reduction}

In order to track the degrees in $n$ of the polynomials involved in Algorithms~\textsf{BasicReduction} and~\textsf{HermiteReduction}, we need to look deeper into the modular inversions involved. This is done in the next lemma, using the same notation as in the discussion preceding Lemma~\ref{lemma:A+B'}.
\begin{lem}\label{lem:modular inverse}
Let $A$, $B$ and $H$ be as in Eq.~\eqref{eqn:input}. Let $G$ be an irreducible factor of the denominator of $H$, and $\nu$ the $G$-adic valuation of $B$. Let $P$ be in $\kk(n)[x]$ and let $Q$ be the polynomial such that $\deg_xQ<\deg_x(G)$ and $P\equiv Q(A+B')\bmod G$. Then $\Rdeg Q-\Rdeg P$ is bounded by
\[\begin{cases}[0]/[0],\qquad&\text{if $\nu>1$};\\
[0]/[1],\qquad&\text{if $\nu=1$ and $G\ndiv T$};\\
[\deg_x(G)-1]/[\deg_x(G)],\quad&\text{if $\nu=1$ and $G\cmid T$.}
\end{cases}
\]
\end{lem}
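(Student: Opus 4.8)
The statement is about the modular inverse computation $P \equiv Q(A+B') \bmod G$, so the plan is to analyze the degree in $n$ of $A+B' \bmod G$ itself, and then track how dividing $P$ by it affects rational degrees. Since $G$ is irreducible, $\kk(n)[x]/(G)$ is a field, and $Q = P \cdot (A+B')^{-1}$ in that field; the rational degree of $Q$ is governed by the rational degrees of $P$ and of $(A+B')^{-1} \bmod G$, so the whole question reduces to: what is $\operatorname{Rdeg}_n$ of the residue $A+B' \bmod G$, and hence of its inverse?

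First I would recall Equation~\eqref{eq:A+B'} from the proof of Lemma~\ref{lemma:A+B'}: modulo $G$,
\[A + B' \equiv (nk+\nu)\,G'G^{\nu-1}\tilde B + S\,(G^\nu\tilde B/T) \bmod G.\]
Now split into the three cases. If $\nu>1$, the first summand vanishes mod $G$ and we are left with $S G^\nu \tilde B/T \bmod G$, which has degree $0$ in $n$; so $(A+B')^{-1} \bmod G$ has degree $0$ in $n$ in both numerator and denominator (it is a polynomial over $\kk$ of $x$-degree $<\deg_x G$), and multiplication by it changes neither $\deg_n$ of the numerator nor of the denominator of $P$ — giving the bound $[0]/[0]$. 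If $\nu=1$ and $G\nmid T$, then $G^\nu\tilde B/T$ is still a polynomial coprime to $G$, so the residue is $(nk+1)G'\tilde B + S(G\tilde B/T) \bmod G$, which is degree exactly $1$ in $n$ (the coefficient of $n$ is $kG'\tilde B$, nonzero mod $G$ since $G$ is square-free and coprime to $\tilde B$). Its inverse in the field $\kk(n)[x]/(G)$ then, written in lowest terms over $\kk(n)$, has numerator of $n$-degree $0$ and denominator of $n$-degree bounded by $\deg_x(G)\cdot 1$? — here I must be careful: the inverse of a degree-$1$-in-$n$ element of a $\deg_x(G)$-dimensional $\kk(n)$-algebra has denominator equal (up to constants) to the resultant in $x$ of $G$ and the representative, which has $n$-degree at most $\deg_x(G)$; but the claimed bound is $[0]/[1]$, so I would argue that in fact the relevant quantity is not the full resultant but only the $n$-degree increase per coefficient, i.e. one should track $Q$ directly through the extended-gcd computation $P = Q(A+B') + VG$ with $\deg_x Q < \deg_x G$, and a clearing-of-denominators argument shows the denominator of $Q$ over $\kk(n)$ gains only one factor linear in $n$. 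The third case, $\nu=1$ and $G\mid T$, is the generic one: now $G^\nu\tilde B/T$ has a $G$ in the denominator cleared against $T$, and the residue of $A+B'$ mod $G$ need no longer be linear in $n$ — reducing $S(G\tilde B/T)$ mod $G$ can introduce $n$-degree $0$ but the inverse computation over the $\deg_x(G)$-dimensional algebra now costs up to $\deg_x(G)-1$ in the numerator and $\deg_x(G)$ in the denominator, which is exactly the claimed $[\deg_x(G)-1]/[\deg_x(G)]$.

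The key tool in all three cases is the same: expressing $Q$ via the Bézout/extended-Euclid relation $P = Q(A+B') + VG$ with the normalization $\deg_x Q < \deg_x G$, and bounding $\operatorname{Rdeg}_n Q$ by a Cramer's-rule / resultant estimate on the linear system over $\kk(n)$ that this relation induces on the coefficients of $Q$ and $V$. The $n$-degree of each entry of the associated Sylvester-like matrix is read off from Equation~\eqref{eq:A+B'}: it is $0$ when $\nu>1$, and at most $1$ when $\nu=1$. Cramer's rule then gives the denominator of $Q$ as (a divisor of) a $\deg_x G \times \deg_x G$ determinant whose entries have $n$-degree $\le 1$ in the $\nu=1$ case and $0$ otherwise, and a slightly smaller minor for the numerator — which yields the three displayed bounds after checking that in the $G\nmid T$ subcase the structure of the matrix forces all but one row to be $n$-free.

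**Main obstacle.**
The delicate point is the $\nu=1,\ G\nmid T$ case: a naive Cramer bound would give $[\deg_x G - 1]/[\deg_x G]$ there too, not the sharper $[0]/[1]$ claimed. Obtaining $[0]/[1]$ requires exploiting that when $G$ does not divide $T$ the element $A+B' \bmod G$ is \emph{literally} of the form $(nk+1)w_1 + w_0$ with $w_0,w_1 \in \kk[x]$ fixed (no $n$ hidden in a cleared denominator), so its inverse in $\kk(n)[x]/(G)$ is $\bigl((nk+1)w_1+w_0\bigr)^{-1}$, a single rational function of the scalar parameter $n$ with coefficients in the fixed field $\kk[x]/(G)$; tracking the denominator of $P$ times this inverse back down to $\kk(n)[x]$ then gives exactly one new linear-in-$n$ factor and no increase in the numerator's $n$-degree. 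Pinning this down cleanly — rather than through the coarse resultant bound — is the one step I expect to need genuine care.
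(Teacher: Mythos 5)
Your overall route is the same as the paper's: you reduce the problem to bounding $\Rdeg$ of the inverse of $A+B'$ modulo $G$, you start from Equation~\eqref{eq:A+B'}, you dispose of the case $\nu>1$ by noting that the residue is then free of $n$, and for $\nu=1$, $G\cmid T$ you bound the cofactor in the relation $P=Q(A+B')+VG$ by a Cramer/resultant estimate on the Sylvester-type system --- which is precisely the paper's Lemma~\ref{bounds-factors}. The problem is the middle case $\nu=1$, $G\ndiv T$, where your argument has a genuine gap.

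There you assert that $G^{\nu}\tilde B/T=G\tilde B/T$ is coprime to $G$, keep the residue in the form $(nk+1)G'\tilde B+S\,(G\tilde B/T)\bmod G$, and then claim that inverting an element of the shape $(nk+1)w_1+w_0$ in $\kk(n)[x]/(G)$ costs only one linear-in-$n$ factor in the denominator and nothing in the numerator. That claim is false in general: the denominator of such an inverse is (a divisor of) the resultant of $G$ and $(nk+1)w_1+w_0$, whose $n$-degree can be as large as $\deg_x G$; for instance, modulo $G=x^2-2$ the inverse of $n+x$ is $(n-x)/(n^2-2)$, of rational degree $[1]/[2]$, not $[0]/[1]$. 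So if the residue really had a nonzero $n$-free part, your "clearing-of-denominators" step would not give $[0]/[1]$, and indeed the bound would not hold by this route. The fact you are missing --- and which the paper's proof uses --- is that the extra term vanishes: when $G\ndiv T$, since $T\cmid B=G\tilde B$ and $G$ is irreducible with $\gcd(G,T)=1$, one gets $T\cmid\tilde B$, hence $G$ divides $G\tilde B/T$ and $S\,(G\tilde B/T)\equiv 0\bmod G$. Thus $A+B'\equiv(nk+1)G'\tilde B\bmod G$ exactly, i.e.\ a scalar $(nk+1)\in\kk[n]$ times an $n$-free unit of $\kk[x]/(G)$ (invertible by Lemma~\ref{lemma:A+B'}), so its inverse $(nk+1)^{-1}(G'\tilde B)^{-1}$ has rational degree $[0]/[1]$. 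With that observation inserted, your case analysis coincides with the paper's proof; without it, the second case is unproved. (A minor slip in your third case: the residue there is still linear in $n$, which is exactly why the resultant bound of Lemma~\ref{bounds-factors} applies; this does not affect your bound.)
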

\begin{proof}
The existence of $Q$ follows from Lemma~\ref{lemma:A+B'}. Notice first that $\Rdeg Q-\Rdeg P=\Rdeg (A+B')^{-1}$ (where $(A+B)^{-1}$ denotes an inverse mod $G$). Indeed, writing $p(n)P= P_0(x) + \cdots + P_d(x) n^d$ with $p(n)$ the denominator of $P$, we see that $p(n)Q = (A+B')^{-1}(P_0\bmod C) + \cdots + (A+B')^{-1} n^d (P_d\bmod C)$ has rational degree in $n$ at most $[d]/[0]+\Rdeg(A+B')^{-1}$.

Thus we just need to bound $\Rdeg(A+B')^{-1}$. To do so, we take a closer look at Equation~\eqref{eq:A+B'}.
If $\nu>1$, the equation becomes
\[A+B'\equiv S\frac{\tilde{B}}{T/G^\nu} \bmod G,\]
so that $(A+B')^{-1}$ does not depend on $n$ in this case.

If $\nu=1$ and $G\ndiv T$, the equation becomes
\[A+B'\equiv (nk+1)G'\tilde{B}.\]
Then, $(A+B')^{-1}\equiv (G'\tilde{B})^{-1}/(nk+\nu)$ has rational degree $[0]/[1]$.
Finally, if $\nu=1$ and $G\cmid T$, the result follows from Lemma~\ref{bounds-factors} below.
\end{proof}

\begin{lem}\label{bounds-factors}
Let $A,B,C$ be polynomials in $\kk[x]$ such that $C$ is relatively prime with at least one of $A$ or $B$. 
Let $U,V \in \kk(n)[x]$  be such that 
\begin{equation}\label{eq:bezout}
1=U(A n+B)+VC, \quad \text{with} \quad \deg_x U<\deg C.
\end{equation}
Then $\Rdeg U$ is bounded by $[\deg C-1]/[\deg C]$.
\end{lem}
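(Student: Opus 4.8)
The plan is to make $U$ completely explicit through Cramer's rule applied to the linear system underlying the Bézout relation, and then to read the degrees in $n$ off the shape of the associated Sylvester matrix. I would first record that $U$ is \emph{unique}, which is what makes the argument work. Write $d=\deg_xC$ and $e=\deg_x(An+B)$; the degenerate cases $An+B=0$ or $d=0$ force $U=0$ and can be set aside, so assume $d\ge1$ and $An+B\neq0$. The hypothesis that $C$ is coprime in $\kk[x]$ to $A$ or to $B$ forces $\gcd_{\kk(n)[x]}(An+B,C)=1$, since a common irreducible factor $D\in\kk[x]$ of $C$ and $An+B$ (which by Gauss's lemma may be taken in $\kk[x]$) would have to divide both $A$ and $B$, contradicting the hypothesis. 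Hence $An+B$ is invertible modulo $C$ in $\kk(n)[x]$, so the representative of its inverse of $x$-degree $<d$ is unique and equals $U$; in particular $\deg_xV<e$.

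Next I would equate the coefficients of $x^0,\dots,x^{d+e-1}$ on the two sides of $1=U(An+B)+VC$, turning the relation into a square linear system $Mw=\mathbf e_1$ whose unknowns $w$ are the $d$ coefficients of $U$ followed by the $e$ coefficients of $V$; here $M$ is a rearrangement of the Sylvester matrix of $An+B$ and $C$, so $\det M=\pm\Resultant_x(An+B,C)$. The structural observation is that the $d$ columns of $M$ attached to the coefficients of $U$ are shifted coefficient vectors of $An+B$, whose entries have degree $\le1$ in $n$, whereas the $e$ columns attached to $V$ are shifted coefficient vectors of $C\in\kk[x]$, with entries of degree $0$ in $n$. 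By the coprimality above, $\Delta:=\det M$ is a nonzero polynomial in $n$; and since at most the $d$ columns coming from $U$ each contribute one factor of $n$ to the determinant expansion, $\deg_n\Delta\le d$.

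Finally, Cramer's rule gives the coefficient of $x^j$ in $U$ as $\Delta_j/\Delta$, where $\Delta_j$ is the determinant of the matrix obtained from $M$ by replacing the $j$-th $U$-column --- one of the columns linear in $n$ --- by the constant vector $\mathbf e_1$; that matrix has only $d-1$ columns with non-constant entries, so $\deg_n\Delta_j\le d-1$. Thus $U=\bigl(\sum_j\Delta_jx^j\bigr)/\Delta$ exhibits $U$ as a fraction whose numerator has $n$-degree $\le d-1$ and whose denominator has $n$-degree $\le d$, and passing to reduced form can only lower both; hence $\Rdeg U$ is bounded by $[d-1]/[d]=[\deg C-1]/[\deg C]$. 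I expect the only delicate point to be the bookkeeping of this Sylvester matrix: one must lay it out carefully enough to be certain that $n$ enters exactly through the $d$ columns coming from $U$, and that Cramer's rule for a coefficient of $U$ does replace one of those $n$-carrying columns (and not a column coming from $C$) by a constant vector.
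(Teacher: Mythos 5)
Your proof is correct and follows essentially the same route as the paper: both arguments rest on Cramer's rule applied to the Sylvester matrix of $C$ and $An+B$, exploiting that only the $\deg C$ columns built from $An+B$ carry entries of degree $1$ in $n$, together with the coprimality of $C$ and $An+B$ in $\kk(n)[x]$ to pin down $U$ uniquely. The paper merely packages this slightly differently, first writing the resultant identity $R=S(An+B)+TC$ and then identifying $U=S/R$ by the same uniqueness argument, whereas you apply Cramer directly to the linear system for the coefficients of $U$ and $V$.
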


\begin{proof}
Let $R(n) \neq 0$ denote the resultant with respect to $x$ of $C$ and $A n + B$.
Then $R = S(An+B) + TC$ for some $S,T$ in $\kk[n,x]$ with $\deg_x S < \deg C$.
Moreover, by Cramer's rule applied to the Sylvester matrix of $C$ and $A n + B$, we have $\deg R \leq \deg C$ and $\deg_n S < \deg C$.

Now denote by~$q(n)$ the monic  denominator of $U$, and by $\tilde{U} \in \kk[n,x]$ its numerator. We need to prove that $\deg q \leq \deg C$ and $\deg_n \tilde{U} < \deg C$. Equalities
\[1 = U(A n+B)+VC, \quad  1 = S/R(A n+B)+ (T/R) C\] imply by subtraction that $C$ divides
$(An+B)(U-S/R)$ in $\kk(n)[x]$. Since $C$ is coprime with $An+B$, this implies that $C$ divides $U-S/R$. As $\deg C < \deg_x(U-S/R)$, this shows that $U = S/R$. In particular, $q$ divides $R$. It follows that $\deg q \leq \deg R \leq \deg C$ and $\deg_n \tilde{U} = \deg_y (qS/R) \leq \deg_n S < \deg C$, which concludes the proof.
\end{proof}

\begin{lem}\label{lem:bound-BasicReduction}
With the same notation as in Lemma~\ref{lem:modular inverse}, and assuming that $G$ has multiplicity $k$ in the denominator of $H$, the output $R$ of \textsf{\emph{BasicReduction}}$(P,A/B,G,k)$ satisfies
\[\deg_xR \le \max(\deg_xP-k\deg_xG,\deg_xA-1,\deg_xB-2),\]
and $\Rdeg R-\Rdeg P$ is bounded by
\[\begin{cases}k[1]/[0],\qquad&\text{if $\nu>1$};\\
k[1]/[1],\qquad&\text{if $\nu=1$ and $G\ndiv T$};\\
k\deg_x(G)[1]/[1],\quad&\text{if $\nu=1$ and $G\cmid T$.}
\end{cases}
\]
\end{lem}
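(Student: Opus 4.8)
The plan is to follow the variables of Algorithm~\textsf{BasicReduction} through a single pass of its \textbf{for}-loop and then to induct on the loop counter. Put $R_0=P$, and for $1\le i\le k$ let $R_i$ be the value of $R$ after the $i$-th pass, $C_i=A+(i-k-1)BG'/G+B'$ the corresponding value of $C$, and $Q_i$ the polynomial with $\deg_x Q_i<\deg_x G$ coming from the division $R_{i-1}=Q_iC_i+V_iG$, so that
\[R_i=\bigl(R_{i-1}-Q_i'B-Q_iC_i\bigr)/G=V_i-Q_i'(B/G),\]
with $B/G\in\kk[x]$ since $G\mid B$, and the division by $G$ exact because $R_{i-1}-Q_iC_i=V_iG$ and $G\mid Q_i'B$; on output $R=R_k$.

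For the degree in $x$: both $BG'/G$ and $B'$ have $x$-degree at most $\deg_x B-1$, so $\deg_x C_i\le\max(\deg_x A,\deg_x B-1)$; the division then gives $\deg_x V_i\le\max(\deg_x R_{i-1}-\deg_x G,\ \deg_x C_i-1)$, and $\deg_x(Q_i'(B/G))\le(\deg_x G-2)+(\deg_x B-\deg_x G)=\deg_x B-2$. Hence
\[\deg_x R_i\le\max(\deg_x R_{i-1}-\deg_x G,\ \deg_x A-1,\ \deg_x B-2),\]
and unrolling this inequality over the $k$ passes starting from $R_0=P$ yields the first bound.

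For the degree in $n$: the only step that can raise $n$-degrees is the modular division producing $Q_i$, since differentiation in $x$, division by $G$ and multiplication by the $\kk[x]$-polynomials $B$ and $B/G$ all leave $n$-degrees unchanged, while $C_i\in\kk[n][x]$ has $\deg_n C_i\le1$ (the coefficient of $n$ in $A$ being $BH'/H\in\kk[x]$). Now $C_i$ is precisely the polynomial ``$A+B'$'' attached to the fraction $A/B+(i-k-1)G'/G$: by Lemma~\ref{lemma:A+B'} this fraction still has trivial gcd with $G$, its $G$-adic valuation is still $\nu$, and whether $G\cmid T$ is unchanged, so the argument of Lemma~\ref{lem:modular inverse} applies verbatim and bounds $\Rdeg Q_i-\Rdeg R_{i-1}$ by $[0]/[0]$, $[0]/[1]$ or $[\deg_x G-1]/[\deg_x G]$ in the three respective cases. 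Since $V_i$ and $R_i$ inherit the $n$-denominator of $Q_i$ (via $V_iG=R_{i-1}-Q_iC_i$ and $R_i=V_i-Q_i'(B/G)$) and multiplying by $C_i$ costs at most an extra $[1]/[0]$, this gives the per-pass increment
\[\Rdeg R_i-\Rdeg R_{i-1}\le\begin{cases}[1]/[0],&\nu>1,\\ [1]/[1],&\nu=1,\ G\ndiv T,\\ [\deg_x G]/[\deg_x G],&\nu=1,\ G\cmid T,\end{cases}\]
and summing over the $k$ passes gives the three stated bounds $k[1]/[0]$, $k[1]/[1]$ and $k\deg_x(G)[1]/[1]$.

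The one point requiring care is the control of $n$-denominators through the last displayed line: a priori, forming $R_{i-1}-Q_iC_i$ and dividing by $G$ might multiply denominators up without bound. I would handle this by carrying along the invariant that the $n$-denominator of $R_i$ divides $d_P(n)\prod_{j\le i}\operatorname{Res}_x(G,C_j)$, where $d_P$ is the $n$-denominator of $P$; the identity $V_iG=R_{i-1}-Q_iC_i$ shows $V_i$, hence $R_i=V_i-Q_i'(B/G)$, inherits the $n$-denominator of $Q_i$, and by Lemma~\ref{lem:modular inverse} each $\operatorname{Res}_x(G,C_j)$ has $n$-degree $0$, $1$ or at most $\deg_x G$ in the same trichotomy. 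With this invariant the subtractions take place over a common denominator of controlled degree, which is what makes the numerator estimates above legitimate.
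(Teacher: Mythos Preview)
Your approach is exactly the paper's: step through the loop, invoke Lemma~\ref{lem:modular inverse} at each pass, and add up. The $x$-degree analysis is fine. But there is a genuine gap in the $n$-degree argument for the second case.

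You assert that after replacing $A/B$ by $A/B+(i-k-1)G'/G$, ``whether $G\cmid T$ is unchanged''. This is false. With $j=i-k-1\in\{-k,\dots,-1\}$ nonzero and $G\nmid T$, the new $S/T$-part is
\[
\frac{S}{T}+j\,\frac{G'}{G}=\frac{SG+jTG'}{TG},
\]
and since $G$ is square-free with $G\nmid T$ and $G\nmid S$, the numerator $SG+jTG'\equiv jTG'\not\equiv0\pmod G$, so this fraction is reduced and its denominator is $TG$: now $G$ \emph{does} divide the new~$T$. Hence from the second pass on you are formally in the third case of Lemma~\ref{lem:modular inverse}, and a blind application of that lemma yields only $[\deg_xG-1]/[\deg_xG]$ per step, giving the final bound $k\deg_x(G)\,[1]/[1]$ rather than the claimed $k\,[1]/[1]$.

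The paper notices this (``the condition $G\nmid T$ is not preserved'') and repairs it by computing $C_i\bmod G$ directly rather than re-invoking Lemma~\ref{lem:modular inverse}. With $\nu=1$ and $G\nmid T$ one has $B=G\tilde B$ with $T\mid\tilde B$, and Eq.~\eqref{eq:A+B'} together with $jBG'/G=j\tilde BG'$ gives
\[
C_i\equiv (nk+1+j)\,G'\tilde B\pmod G,
\]
a product of a degree-$1$ polynomial in $n$ and an element of $\kk[x]$. Its inverse modulo $G$ therefore still has $\Rdeg$ equal to $[0]/[1]$, and the per-pass increment $[1]/[1]$ survives. You should replace the ``unchanged'' claim by this direct computation.
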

\begin{proof}
	Both bounds follow from Lemma~\ref{lem:modular inverse}. For the degree in $n$, the polynomial $Q$ of the algorithm is obtained from a modular inverse as above, that is multiplied by $R$, and then by $C$ that has rational degree $[1]/[0]$ in $n$. The bound directly follows in the first and third cases since the condition on $\nu$ and $G$ is preserved at each step. In the second case, the condition $G\nmid T$ is not preserved, but writing $J=iG'/G+S/T$ at each step shows that the degree in $n$ still increases by $[1]/[1]$ only. For the degree in $x$, the bound is obtained by bounding each term in the expression of $R$ that is used in the algorithm.
\end{proof}

\begin{lem}\label{lem:bound-HermiteReduction}
With  the same notation, write the denominator $g$ of $H$ as $g=efh$, where
\[e=\gcd(g,T,T'),\quad f=\gcd(g/e,T).\]
Also denote $m$ the highest multiplicity of the roots of $g$, and 
let $e=e_1e_2^2\cdots e_m^m$ and $h=h_1h_2^2\cdots h_m^m$ be the square-free decompositions of $e$ and $h$.

Then, the result of Algorithm \textsf{\emph{HermiteReduction}}$(P,H,S/T)$ is a polynomial in $x$ of degree at most 
\[\max(\deg_xP+\deg_x^\infty H,\deg_xP-1,\deg_xA-1,\deg_xB-2).\]
Seen as a rational function in~$n$, it has degree at most
\[\Rdeg P + \frac{[\max_{e_k\neq1}{k}+\sum_{h_k\neq 1}{k}+\deg_xf]}{[\sum_{h_k\neq 1}{k}+\deg_xf]}.\]
Moreover, the certificate $Q$ in the algorithm satisfies $Q=qB/(gH)$ for some polynomial $q$ such that $\deg_xq<\deg_xg$.
\end{lem}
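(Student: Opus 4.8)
The plan is to read the output of Algorithm~\textsf{HermiteReduction} off its three ingredients and bound each separately. Writing the partial fraction decomposition as $H=U+\sum_{k=1}^m U_k/g_k^{\,k}$, with $g=g_1g_2^2\cdots g_m^m$ the square-free decomposition of the denominator of $H$, $U,U_k\in\kk[x]$ and $\deg_x U_k\le k\deg_x g_k-1$, the returned polynomial is $R=PU+r_1+\cdots+r_m$, where $r_k$ is the output of \textsf{BasicReduction} on $(PU_k,K,g_k,k)$ with $K=S/T+(n-1)H'/H$. Since $U\in\kk[x]$ we have $\deg_x(PU)\le\deg_x P+\deg_x^\infty H$ and $\Rdeg(PU)=\Rdeg P$; and since the numerator and denominator of $K$ have $x$-degrees at most $\deg_x A$ and $\deg_x B$, Lemma~\ref{lem:bound-BasicReduction} applied to $(PU_k,K,g_k,k)$ gives $\deg_x r_k\le\max\bigl(\deg_x(PU_k)-k\deg_x g_k,\,\deg_x A-1,\,\deg_x B-2\bigr)\le\max(\deg_x P-1,\deg_x A-1,\deg_x B-2)$, using $\deg_x(PU_k)-k\deg_x g_k\le\deg_x P-1$. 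Taking the maximum with the $PU$ term yields the bound on $\deg_x R$.

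For the degree in $n$ I would classify the square-free factors $g_k$ according to how their roots meet $T$: the roots that are multiple roots of $T$ form the part of $g_k$ dividing $e=\gcd(g,T,T')$, the simple roots of $T$ the part dividing $f=\gcd(g/e,T)$, and the roots coprime to $T$ the part dividing $h=g/(ef)$. Since a single $g_k$ may mix all three behaviours, the first step is to extend the estimate of Lemma~\ref{lem:bound-BasicReduction} to a square-free modulus through the Chinese remainder theorem, treating each irreducible factor by the case of Lemma~\ref{lem:modular inverse} it falls under: on the $e$-part each of the $k$ elementary steps multiplies by a modular inverse that is free of $n$, so only the numerator in $n$ grows; on the $f$-part the relevant inverse has denominator in $n$ of degree at most $\deg_x f_k$, controlled by the resultant $\Resultant_x(f_k,C_i)$ as in Lemma~\ref{bounds-factors}, and raises numerator and denominator equally; on the $h$-part the inverse at step $i$ has, as denominator in $n$, a single linear factor, which a short residue computation at a root of $h_k$ identifies as $kn-i$ up to a nonzero constant --- and this factor is independent of the root. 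Placing $PU$ and all the $r_k$ over a common denominator in $n$, the $h$-parts then contribute at most the factors $kn-1,\dots,kn-k$ over all $k$ with $h_k\ne1$, i.e.\ a denominator degree $\le\sum_{h_k\ne1}k$; the $f$-parts contribute $\le\sum_k k\deg_x f_k=\deg_x f$; and the $e$-parts contribute nothing to that denominator but, once every summand is brought over it, raise the numerator degree by at most $\max_{e_k\ne1}k$. This is the stated bound on $\Rdeg R$.

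For the certificate, unwinding the loop of \textsf{BasicReduction} --- at step $i$, $R\leftarrow(R-Q_i'B-Q_iC_i)/g_k$ with $\deg_x Q_i<\deg_x g_k$ --- shows that the $k$-th call yields $PU_k/g_k^{\,k}=r_k+q_k'+Kq_k$ with $q_k=B\sum_{i=1}^k Q_i/g_k^{\,k-i+1}=\tilde q_k B/g_k^{\,k}$ for a polynomial $\tilde q_k$ of $x$-degree $<k\deg_x g_k$ (here $B$ is the denominator of $K$, which is the $B$ of~\eqref{eqn:input}). The polynomial summand $PU$ needs no reduction, so summing gives $PH=R+\bigl(\sum_k q_k\bigr)'+K\sum_k q_k$ with $\sum_k q_k=B\sum_k\tilde q_k/g_k^{\,k}$; since $g=\prod_k g_k^{\,k}$ and $\deg_x\tilde q_k<\deg_x(g_k^{\,k})$, this equals $qB/g$ for a single polynomial $q$ with $\deg_x q<\deg_x g$. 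Comparing with the output specification $P=R/H+nQH'/H+QS/T+Q'$ identifies $\sum_k q_k$ with $QH$, whence $Q=qB/(gH)$ with $\deg_x q<\deg_x g$, as claimed.

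The main obstacle is the second paragraph: making precise the root-independence of the factors $kn-i$ (so that a multiplicity-$k$ square-free factor with many roots still costs only degree $k$, not $k\deg_x g_k$, in the denominator in $n$), combining the per-factor estimates of Lemma~\ref{lem:bound-BasicReduction} cleanly through the Chinese remainder theorem when $g_k$ mixes the $e$-, $f$- and $h$-behaviours, and tracking exactly how the numerators in $n$ of the various summands interact once they are brought over the least common denominator. The other two parts are essentially routine bookkeeping on top of Lemmas~\ref{lem:bound-BasicReduction} and~\ref{bounds-factors} and the structure of the algorithms.
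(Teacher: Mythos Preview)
Your proposal is correct and follows essentially the same line as the paper: decompose the output as $PU+\sum_k r_k$, bound the $x$-degree of each summand via Lemma~\ref{lem:bound-BasicReduction}, bound the rational degree in~$n$ of each $r_k$, and then put everything over a common denominator in~$n$ (which is what turns the $e$-contribution into a $\max$ rather than a sum). The certificate argument is likewise the same as the paper's, only spelled out in more detail.

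The one place where you are more careful than the paper is the second paragraph. The paper simply writes down
\[
\Rdeg r_k \le \Rdeg P + \frac{[\,k(\mathbf{1}_{e_k\neq1}+\mathbf{1}_{h_k\neq1}+\deg_x f_k)\,]}{[\,k(\mathbf{1}_{h_k\neq1}+\deg_x f_k)\,]}
\]
as a direct consequence of Lemma~\ref{lem:bound-BasicReduction}, and then says ``normalizing $R+r_1+\cdots+r_m$ yields the result''. Strictly speaking, Lemma~\ref{lem:bound-BasicReduction} inherits from Lemma~\ref{lem:modular inverse} the hypothesis that $G$ is irreducible, whereas here $g_k$ is only square-free and may mix the $e$-, $f$- and $h$-behaviours. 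Your CRT argument and the explicit identification of the denominator factor $kn-i$ on the $h$-part (independent of the root) are exactly what is needed to justify that displayed bound for a square-free $g_k$; the paper leaves this implicit. So your ``main obstacle'' is real, but your sketch handles it correctly, and after that the normalization step is the same computation in both proofs: the common denominator has degree $\sum_{h_k\neq1}k+\deg_x f$, and since each $r_k$ has numerator-minus-denominator degree at most $k\,\mathbf{1}_{e_k\neq1}$, the numerator of the sum exceeds that common denominator by at most $\max_{e_k\neq1}k$.
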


\begin{proof}
Following the notation of Algorithm~\textsf{HermiteReduction}, the partial fraction decomposition of $H$ produces $U$ with $\deg_xU\le \deg_x^\infty H$ and $U_k$ with $\deg_xU_k<\deg_xg_k$. $PU$ obviously satisfies the bounds. Now write the square-free decomposition $f=f_1f_2^2\cdots f_m^m$. By Lemma~\ref{lem:bound-BasicReduction}, $\Rdeg r_k \le \Rdeg P+[k(\mathbf{1}_{e_k\neq1}+\mathbf{1}_{h_k\neq 1}+\deg_xf_k)]/[k(\mathbf{1}_{h_k\neq 1}+\deg_xf_k)]$. Normalizing $R+r_1+r_2+\dots+r_m$ then yields the result. The bound for the degree in $x$ is obtained by bounding separately each term using Lemma~\ref{lem:bound-BasicReduction}. The form of $Q$ follows from the fact that the certificate of the $k$-th call to \textsf{BasicReduction} has the form $q_kB/(g_k^kH)$ with $\deg_xq_k<\deg_x(g_k^k)$.
\end{proof}

\subsection{Mixed Creative Telescoping}\label{sec:mixedct-bound-deg}

\begin{lem}\label{lem:bound-matrix}
	With the notation of Eq.~\eqref{eqn:input}, Eq.~\eqref{eq:eqdelta}, Algorithm \emph{\textsf{MixedCT}}, and $d_H=\deg_x^\infty H$, for all $i$ in $\left\{1,2,\dots,\delta \right\}$ we have	
	\[\Rdeg R_i\le \deg_nP\cdot\frac{[1]}{[0]}+\alpha+i(\beta+\gamma)\]
	where
	\[\alpha=\begin{cases}\max(\deg_xP-\delta+1,0)\cdot\frac{[1]}{[1]}\quad&\text{if $\delta=\deg_xA$,}\\
	{\left[\max\left(\left\lfloor\frac{\deg_xP-\delta}{\delta-\deg_xA}\right\rfloor+1,0\right)\right]}/{[0]}&\text{otherwise.}
	\end{cases}\]
	\[\beta= \frac{[\max_{e_k\neq1}{k}+\sum_{h_k\neq 1}{k}+\deg_xf]}{[\sum_{h_k\neq 1}{k}+\deg_xf]}\]
	\[\gamma=\begin{cases}(d_H+1)\cdot\frac{[1]}{[1]}\quad&\text{if $d_H\ge 0$ and $\delta=\deg_xA$,}\\
	{\left[\left\lfloor\frac{d_H}{\delta-\deg_xA}\right\rfloor+1\right]}/{[0]}&\text{if $d_H\ge 0$ and $\delta=\deg_xB-1$,}\\
	0\quad& \text{otherwise}.
	\end{cases}\]
\end{lem}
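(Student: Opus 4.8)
The plan is to prove Lemma~\ref{lem:bound-matrix} by induction on $i$, tracking how the rational degree in $n$ of the confined reductions $R_i$ evolves through one full cycle of the main loop of Algorithm~\textsf{MixedCT}. The base case $i=0$ is the assertion that $R_0=\textsf{Confinement}(P\mid_{n\mapsto n-1},F)$ has $\Rdeg R_0 \le \deg_n P\cdot\frac{[1]}{[0]} + \alpha$; this follows directly from Lemma~\ref{lem:bound-Confinement} applied to the polynomial $P\mid_{n\mapsto n-1}$, whose degree in $x$ is $\deg_x P$ and whose degree in $n$ is $\deg_n P$ (the shift $n\mapsto n-1$ does not change degrees), noting that the two cases of $\alpha$ match the two cases of Lemma~\ref{lem:bound-Confinement} after replacing $\deg_x P - \deg_x A + 1$ by its nonnegative part (the degree can only drop to $\delta-1$, so no increase occurs when $P$ already has small degree — hence the $\max(\cdot,0)$).

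For the inductive step, suppose the bound holds for $R_{i-1}$. One cycle of the loop does three things: shift $R_{i-1}$ by $n\mapsto n+1$ (which does not change $\Rdeg$), apply \textsf{HermiteReduction}, then apply \textsf{Confinement}. By Lemma~\ref{lem:bound-HermiteReduction}, \textsf{HermiteReduction} increases the rational degree in $n$ by at most $\beta = \frac{[\max_{e_k\neq1}{k}+\sum_{h_k\neq 1}{k}+\deg_xf]}{[\sum_{h_k\neq 1}{k}+\deg_xf]}$, and produces a polynomial whose degree in $x$ is at most $\max(\deg_x P' + d_H, \deg_x P' - 1, \deg_x A - 1, \deg_x B - 2)$ where $P'$ is the input to that call; since after confinement the $x$-degree is always $\le \delta - 1$, the relevant bound on the $x$-degree of the Hermite output is $\max(\delta-1+d_H,\ \delta-2,\ \deg_x A - 1,\ \deg_x B - 2)$, which is $\le \delta - 1 + \max(d_H,0)$ once one observes $\deg_x A - 1 \le \delta - 1$ and $\deg_x B - 2 \le \delta - 1$. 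Feeding this into Lemma~\ref{lem:bound-Confinement}: the extra $x$-degree beyond $\delta$ is at most $\max(d_H,0)$ (the $-1$ terms never push the degree above $\delta-1$, so they contribute nothing extra), so \textsf{Confinement} adds at most $\gamma$ to the rational degree in $n$ — the first case of $\gamma$ being $(\deg_x(\text{Hermite output}) - \delta + 1)\frac{[1]}{[1]} = (d_H+1)\frac{[1]}{[1]}$ when $\delta = \deg_x A$, the second being the floor expression from the other branch of Lemma~\ref{lem:bound-Confinement} with $\deg_x P$ replaced by $\delta - 1 + d_H$, and $\gamma = 0$ when $d_H < 0$ because then the Hermite output already has $x$-degree $< \delta$ and confinement is a no-op. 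Adding $\beta + \gamma$ to the inductive hypothesis $\deg_n P\cdot\frac{[1]}{[0]}+\alpha+(i-1)(\beta+\gamma)$ gives exactly the claimed bound for $R_i$.

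The main obstacle will be the bookkeeping in the second branch of the confinement bound: Lemma~\ref{lem:bound-Confinement} in the case $\deg_x B > \deg_x A + 1$ gives an increment $\lfloor\frac{\deg_x P - \deg_x B + 1}{\delta - \deg_x A}\rfloor + 1$ that is \emph{not} additive under composition — two reductions that each increase the $x$-degree by $d_H$ and then get confined do not simply contribute $2\gamma$ unless one is careful that each \textsf{Confinement} call sees an input of $x$-degree at most $\delta - 1 + d_H$ (not $\delta - 1 + 2d_H$ or worse). This is guaranteed precisely because \textsf{Confinement} restores degree $\le \delta - 1$ before the next \textsf{HermiteReduction}, so the cycle is genuinely "memoryless" in $x$-degree and the per-cycle increments $\beta$ and $\gamma$ are the same at every step — that is what makes the final bound linear in $i$. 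A secondary subtlety is checking that $\max(d_H, 0)$ is the right quantity in $\gamma$: when $d_H < 0$ the partial-fraction remainder $U$ in \textsf{HermiteReduction} is zero, the term $PU$ vanishes, and the $x$-degree after Hermite reduction stays below $\delta$, so no confinement-induced growth occurs; this is the "otherwise" case $\gamma = 0$. Once these degree-tracking points are pinned down, the rest is the routine case analysis matching the three clauses of $\gamma$ with the two clauses of Lemma~\ref{lem:bound-Confinement} and the sign of $d_H$.
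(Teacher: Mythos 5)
Your proof is correct and follows essentially the same route as the paper: the initial confinement accounts for $\alpha$ via Lemma~\ref{lem:bound-Confinement}, each loop iteration adds $\beta$ by Lemma~\ref{lem:bound-HermiteReduction} and at most $\gamma$ by feeding the Hermite output (of $x$-degree at most $\delta-1+\max(d_H,0)$, so confinement acts only when $d_H\ge 0$) back into Lemma~\ref{lem:bound-Confinement}. The only quibble is a harmless internal slack: you first bound the Hermite output degree by $\delta-1+d_H$ but then evaluate $\gamma$ as if it were $\delta+d_H$; either value stays within the stated $\gamma$, so the conclusion is unaffected.
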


\begin{proof}
	By Lemma~\ref{lem:bound-Confinement}, the initial confinement increases $\Rdeg P$ by $\alpha$.
	\textsf{HermiteReduction} is then always used with an input polynomial of degree less than $\delta$. By Lemma~\ref{lem:bound-HermiteReduction} each call to \textsf{HermiteReduction} increases $\Rdeg P$ by at most $\beta$, and produces an output of degree at most $\delta$ if $d_H<0$ or $\delta+d_H$ if $d_H\ge 0$. Thus the confinement is only necessary in the latter case. Plugging this bound into Lemma~\ref{lem:bound-Confinement} shows that each call to \textsf{Confinement} increases $\Rdeg P$ by at most~$\gamma$.
\end{proof}

\subsection{Degree bound on the telescoper}

\begin{lem}\label{lem:degree bound of certificate} 
	With the notation of~\eqref{eqn:input}, let $(c_0,\dots,c_{r-1})$ be the output of \emph{\textsf{MixedCT}}($P$,$H$,$S/T$) and write $H=f/g$ with $\gcd(f,g)=1$.
	
	Then there exists a polynomial $Q\in\kk(n)[x]$ such that
	\[\left(S_n^r-\sum_{i=0}^{r-1}{c_iS_n^i}\right)(P\Phi)=\left(\frac{Q}{g^r}B\Phi\right)',\]
	with $\deg_x Q\le r\deg_xH +\max(\deg_xP-\delta,0)-1$.
\end{lem}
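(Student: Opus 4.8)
The goal is to track the denominator of the certificate accumulated over the run of \textsf{MixedCT}, and simultaneously bound the degree in $x$ of its numerator. The plan is to unwind the algorithm step by step. At each iteration, $R_k\Phi$ is obtained from $R_{k-1}\Phi$ by first shifting $n\mapsto n+1$ (which does not change the $x$-degree and, crucially, does not change the denominator shape since the certificate lives in $\kk(n)[x]$ times $\Phi$), then applying \textsf{HermiteReduction}, then \textsf{Confinement}. By Lemma~\ref{lem:bound-HermiteReduction}, the certificate produced by each call to \textsf{HermiteReduction} has the form $qB/(gH)$ with $\deg_x q<\deg_x g$; writing $H=f/g$, this denominator contributes exactly one extra factor of $g$ (since $B/(gH)=B/f$ has no $g$ in its denominator, but the intermediate reductions in \textsf{BasicReduction} introduce $g$'s). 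By Lemma~\ref{lemma:confinement}, \textsf{Confinement} adds a certificate of the form $QB\Phi$ with $Q$ polynomial, so it contributes nothing to the denominator. Hence after $r$ passes the accumulated denominator is $g^r$ up to the fixed factor $B$, which is precisely the claimed shape $(Q/g^r)B\Phi$.

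First I would set up the telescoping identity. The output $(c_0,\dots,c_{r-1})$ satisfies $R_r=\sum_{i=0}^{r-1}c_i R_i$ in $\kk(n)[x]$, and by construction each $F_{n+i}=P(n+i,x)\Phi(n+i,x)$ equals $R_i\Phi$ modulo a derivative, say $F_{n+i}=R_i\Phi+(C_i\Phi)'$ for some certificate $C_i$. Then $\bigl(S_n^r-\sum c_i S_n^i\bigr)(P\Phi)=F_{n+r}-\sum c_i F_{n+i}=(R_r-\sum c_i R_i)\Phi+\bigl((C_r-\sum c_i C_i)\Phi\bigr)'=\bigl((C_r-\sum c_i C_i)\Phi\bigr)'$, so the certificate is $Q_{\mathrm{cert}}=C_r-\sum_i c_i C_i$. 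It therefore suffices to bound the denominator and numerator degree of each $C_i$ and take the maximum, since the $c_i$ do not affect the $x$-structure.

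Next I would bound $C_i$ by induction on $i$. The certificate $C_i$ is assembled from: the certificate $C_{i-1}$ shifted in $n$ (denominator $g^{i-1}$, same $x$-degree), plus the certificate contributed by the $i$-th \textsf{HermiteReduction} call applied to $R_{i-1}|_{n\mapsto n+1}$, plus the \textsf{Confinement} certificate (polynomial multiple of $B\Phi$, irrelevant to the denominator). By Lemma~\ref{lem:bound-HermiteReduction} the \textsf{HermiteReduction} certificate is $q_i B/(gH)$ with $\deg_x q_i<\deg_x g$; writing $H=f/g$ gives $B/(gH)=B/f$, so that certificate is $q_i B/f$ — wait, this is where care is needed: the factor of $g$ actually appears because the reduction \emph{lowers} powers of $g$ in the denominator of the running quotient, so the accumulated certificate over one Hermite pass picks up $g^k/g^k$-type cancellations but the net new denominator relative to the previous state is one power of $g$. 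I would verify from the \textsf{BasicReduction} loop comment ($\tfrac{P}{G^k}\Phi=\tfrac{R}{G^{k-i}}\Phi+(q\Phi)'$) that the $i$-th global certificate has denominator dividing $g^i$ (times the fixed $B$), and that $\deg_x$ of its numerator grows by at most $\deg_x H+\max(\deg_x P-\delta,0)$ per step — the $\deg_x H$ from the new $q_iB/(gH)$ term and the $\max(\deg_x P-\delta,0)$ accounting for the input polynomial degree before confinement brings it back below $\delta$. Summing over $r$ steps and subtracting $1$ (from the $\deg_x q<\deg_x g$ strict inequality) yields $\deg_x Q\le r\deg_x H+\max(\deg_x P-\delta,0)-1$.

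\textbf{Main obstacle.} The delicate point is the bookkeeping of how powers of $g$ in the certificate's denominator accumulate versus cancel across the nested loops of \textsf{HermiteReduction} and \textsf{BasicReduction}, combined with the $n$-shift between passes. One must check that shifting $n\mapsto n+1$ genuinely does not disturb the $x$-denominator (true because $g\in\kk[x]$ and the shift acts only on $n$), and that the Hermite pass on a degree-$<\delta$ input, whose certificate denominator is $g^k$ locally, contributes exactly one net power of $g$ to the global certificate after the previous certificate (denominator $g^{i-1}$) is folded in — i.e. the new certificate has denominator $g^i$, not $g^{i-1}\cdot g^{m}$ for the multiplicity $m$. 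Establishing this requires carefully re-reading the loop invariant of \textsf{BasicReduction} and confirming that the final certificate of \textsf{HermiteReduction} is $qB/(gH)$ with a single $g$ in play (as Lemma~\ref{lem:bound-HermiteReduction} asserts), then arguing the global accumulation is additive in the exponent of $g$ precisely because $\gcd(f,g)=1$ so no further cancellation with $B/f$ occurs.
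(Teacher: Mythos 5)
Your global structure matches the paper's: reduce to bounding, for each $i$, a certificate $C_i$ with $F_{n+i}=R_i\Phi+(C_i\Phi)'$, then take the combination $C_r-\sum_i c_iC_i$, and prove by induction that $C_i$ has denominator $g^i$ (times the fixed $B$) and numerator of $x$-degree at most $\max(\deg_xP-\delta,0)+i\deg_xH-1$. But the inductive mechanism you give is wrong, and it is exactly the point the proof hinges on. You assert that the shift $n\mapsto n+1$ ``does not change the denominator shape'' of the inherited certificate, and that the extra power of $g$ per pass comes from the \textsf{HermiteReduction} certificate. In fact it is the other way around. Since $\Phi(n+1,x)=H\,\Phi(n,x)$, shifting the relation $F_{n+i-1}=R_{i-1}\Phi+(C_{i-1}\Phi)'$ and re-expressing everything with respect to the \emph{fixed} reference function multiplies the inherited certificate by $H=f/g$: this is precisely where the new factor $g$ in the denominator and the factor $f$ (hence the $\deg_xH$ growth of the numerator) come from; in the paper this is the term $fQ_{i-1}(n+1,x)$ in the recursion $Q_i=fQ_{i-1}(n+1,x)+\tilde{Q}_ig^{i-1}+\overline{Q}_ig^i$. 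The \textsf{HermiteReduction} certificate, by contrast, contributes only $\tilde{Q}_i/g$ with $\deg_x\tilde{Q}_i<\deg_xg$ relative to the current reference (equivalently, as you yourself compute, $q_iB/(gH)=q_iB/f$ relative to the shifted one), and \textsf{Confinement} contributes a polynomial multiple of $B$. You notice the resulting contradiction in your own accounting (``wait, this is where care is needed'') but do not resolve it, and your ``Main obstacle'' paragraph defers exactly this verification; so the key inductive step is not established.

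The degree bookkeeping also does not close as written. You let the numerator grow by $\deg_xH+\max(\deg_xP-\delta,0)$ \emph{per step}; summed over $r$ steps this gives $r\max(\deg_xP-\delta,0)$, which overshoots the claimed bound. In the paper the term $\max(\deg_xP-\delta,0)$ enters only once, from the initial confinement of $P$ (whose certificate is a polynomial multiple of $B$ of degree at most $\deg_xP-\delta$); thereafter each step adds at most $\deg_xH$, because the inherited numerator is multiplied by $f$ while the new terms $\tilde{Q}_ig^{i-1}$ and $\overline{Q}_ig^i$ (the latter of degree at most $\max(\deg_x^\infty H,0)-1+i\deg_xg$) stay within $i\deg_xH-1$. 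With the correct shift mechanism both the $g^i$ denominator and the degree bound fall out of the single recursion above; without it, neither part of your argument goes through.
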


\begin{proof}
	Tracking the certificates of the various rewritings in \textsf{MixedCT}, it suffices to show that for all $i\in\left\{0,\dots,r\right\}$
	\[P(n+i-1,x)H^{i}\frac{\Phi}{H}=R_i\frac{\Phi}{H}+\left(\frac{Q_i}{g^i}B\frac{\Phi}{H}\right)'\]
	for some polynomial $Q_i$ such that 
	\[\deg_x Q_i\le\max(\deg_xP-\delta,0)+i\deg_xH-1.\]
	
	This is obvious for $i=0$ (initial confinement). Assume this is true for $i-1$, then the next Hermite reduction writes
	\[R_{i-1}(n+1,x)\Phi=\tilde{R}_i\frac{\Phi}{H}+\left(\frac{\tilde{Q}_i}{g}B\frac{\Phi}{H}\right)'\]
	with $\deg_x\tilde{Q}_i<\deg_xg$ and $\deg_x\tilde{R}_i\le \delta-1+\delta_H$, where $\delta_H=\max(\deg_x^\infty H,0)$ (see Lemma~\ref{lem:bound-HermiteReduction}).
	As for the confinement,
	\[\tilde{R}_i\frac{\Phi}{H}=R_i\frac{\Phi}{H}+\left(\overline{Q}_iB\frac{\Phi}{H}\right)',\]
	with $\deg_x\overline{Q}_i\le\delta_H-1$ by Lemma~\ref{lemma:confinement}.
	Thus, the property is satisfied for $i$ with
	\[Q_i=fQ_{i-1}(n+1,x)+\tilde{Q}_ig^{i-1}+\overline{Q}_ig^i,\]
	from which follows the bound on the degree of $Q_i$.
\end{proof}

\begin{thm}\label{thm:bound-degree} With the notation of~\eqref{eqn:input}, the telescoper $L$ produced by Algorithm \emph{\textsf{MixedCT}}($P$,$H$,$S/T$) satisfies
	\[\deg_nL\le r(\deg_nP+\deg_xH)+\max(\deg_xP-\delta,0),\]
	where $r$ is the order of $L$.
\end{thm}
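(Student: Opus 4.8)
The plan is to bound $\deg_n L$ by expressing the telescoper directly in terms of the quantities computed by Algorithm \textsf{MixedCT} and invoking the degree bounds already established. Recall that the algorithm finds $(c_0,\dots,c_{r-1})$ in $\kk(n)$ satisfying $R_r=\sum_{i=0}^{r-1}c_i R_i$, where the $R_i\in\kk(n)[x]$ are the confined reductions of the successive shifts. So the first step is to clear denominators: the $c_i$ are obtained by solving a $\kk(n)$-linear system whose matrix entries are the coefficients (in $x$) of the $R_i$, $i=0,\dots,r-1$, and whose right-hand side is the coefficient vector of $R_r$. By Cramer's rule, each $c_i$ is a ratio of two $r\times r$ minors of a matrix whose entries have $n$-degree controlled by Lemma~\ref{lem:bound-matrix}. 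Hence $\deg_n c_i$ is at most $r$ times the maximal $n$-degree occurring among the $R_0,\dots,R_r$, minus the $n$-degree of the denominator minor; but since we only need an upper bound on the numerator degree of the monic telescoper, it suffices to bound $\deg_n$ of the $c_i$ by $r$ times $\max_i \deg_n R_i$.

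The second step is to read off $\max_i \deg_n R_i$ from Lemma~\ref{lem:bound-matrix}. That lemma gives, for $i\le\delta$ (and $r\le\delta$ by Theorem~\ref{thm:mixedct-order}),
\[\deg_n R_i \le \deg_n P + \deg_n\alpha + i(\deg_n\beta + \deg_n\gamma),\]
where $\deg_n\alpha = \max(\deg_x P - \delta, 0)$ in the first case of $\alpha$ and is no larger in the other, $\deg_n\beta$ equals the numerator-degree quantity $\max_{e_k\ne1}k + \sum_{h_k\ne1}k + \deg_x f$, and $\deg_n\gamma$ contributes a further bounded amount tied to $d_H$. The point is that all of these collapse: $\deg_n\beta + \deg_n\gamma$, being the per-step increase from one Hermite-plus-confinement round, is bounded by $\deg_x H$ (a Hermite reduction against $H$ followed by a re-confinement cannot raise the $n$-degree by more than the $x$-degree of the multiplier $H$, since the modular-inverse denominators all divide resultants of size $\le\deg_x(\text{factor of }g)$, and these factors sum to $\deg_x g\le\deg_x H$). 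Thus $\max_i\deg_n R_i \le \deg_n P + \max(\deg_x P - \delta, 0) + \delta\,\deg_x H$, and more precisely the relevant index is $i=r$, giving $\deg_n P + \max(\deg_x P-\delta,0) + r\,\deg_x H$.

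The final step is to assemble: the monic telescoper is $L=S_n^r-\sum_{i=0}^{r-1}c_i S_n^i$ with $\deg_n c_i \le \deg_n R_r + (r-1)\max_i\deg_n R_i$ from Cramer — but one gets the cleaner stated bound by observing (as in the proof of Lemma~\ref{lem:degree bound of certificate}, tracking the certificate $Q_i$) that each shift contributes exactly one factor whose $n$-degree is $\deg_n P + \deg_x H$, over $r$ rounds, plus the initial confinement overhead $\max(\deg_x P-\delta,0)$; hence $\deg_n L \le r(\deg_n P + \deg_x H) + \max(\deg_x P - \delta, 0)$. The main obstacle is the Cramer's-rule bookkeeping: a naive bound gives $r\cdot\max_i\deg_n R_i$, which is roughly $r(\deg_n P + r\,\deg_x H)$, quadratic in $r$; to get the stated linear-in-$r$ bound one must argue that the determinant structure, together with the fact that the $n$-degree of $R_i$ grows only linearly in $i$, lets the leading-minor denominator absorb all but a linear-in-$r$ part of the growth — equivalently, reuse the certificate-tracking argument of Lemma~\ref{lem:degree bound of certificate} to bound $\deg_n$ of the $c_i$ by comparing the two sides of the telescoping relation after multiplying through by $g^r$ and $B$.
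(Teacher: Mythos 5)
There is a genuine gap. The route you actually develop --- Cramer's rule on the $\kk(n)$-linear system $\sum_i c_iR_i=R_r$ with the $n$-degrees of the $R_i$ taken from Lemma~\ref{lem:bound-matrix} --- cannot give the stated bound: since $\deg_n R_i$ grows linearly in $i$, summing over $r$ columns gives roughly $r\deg_nP+\binom{r}{2}\deg_xH$, quadratic in $r$, exactly as you concede in your last paragraph. Your proposed repair is never carried out: the claim that ``the leading-minor denominator absorbs all but a linear-in-$r$ part of the growth'' is asserted without any argument, and the ``assembly'' step (``each shift contributes exactly one factor whose $n$-degree is $\deg_nP+\deg_xH$'') is not something Lemma~\ref{lem:degree bound of certificate} provides --- that lemma bounds only the \emph{$x$-degree} of the certificate $Q$, and says nothing about degrees in $n$. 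So the proposal states the conclusion at the point where the actual work has to be done.

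The missing idea, which is the paper's proof, is to change the linear system. Using Lemma~\ref{lem:degree bound of certificate}, write the telescoping relation with the polynomial certificate, divide by $\Phi$ and clear $g^r$, obtaining the polynomial identity
\[P(n+r)f^r-\sum_{i=0}^{r-1}c_iP(n+i)f^ig^{r-i}=(QB)'-rQ\frac{Bg'}{g}+QA,\]
where $Q=\sum_{i=0}^{s}q_ix^i$ and $s=\max(\deg_xP-\delta,0)+r\deg_xH-1$. Extracting coefficients in $x$ yields a linear system over $\kk[n]$ in \emph{two} blocks of unknowns: the $c_i$, whose columns have entries of $n$-degree at most $\deg_nP$, and the $q_j$, whose columns have entries of $n$-degree at most $1$ (only $A$ is linear in $n$). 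Hadamard's (Cramer's) bound applied to this system gives $\deg_nc_i\le r\deg_nP+s+1=r(\deg_nP+\deg_xH)+\max(\deg_xP-\delta,0)$. The point you miss is that the $n$-degrees of the intermediate reductions $R_i$ are irrelevant to the final bound; the linearity in $r$ comes from treating the certificate's coefficients as cheap (degree-$1$) unknowns in the cleared-denominator identity, not from any cancellation inside the determinants of the $R_i$-system.
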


\begin{proof}
	Write $H=f/g$ with $\gcd(f,g)=1$. Rewriting Lemma~\ref{lem:degree bound of certificate} in terms of polynomials yields
	\[P(n+r)f^r-\sum_{i=0}^{r-1}{c_iP(n+i)f^ig^{r-i}}=(QB)'-rQ\frac{Bg'}{g}+QA\]
	for some $Q=\sum_{i=0}^s{q_ix^i}$ with $s=\max(\deg_xP-\delta,0)+r\deg_xH-1$.
	This equation is a linear system with two blocks of unknowns: $c_0,\dots,c_{r-1}$ with coefficients of degree bounded by $\deg_n P$ and $q_0,\dots,q_s$ with coefficients of degree $1$, which by Hadamard's bound yields $\deg_nc_i\le r\deg_nP+s+1$, whence the theorem.
\end{proof}

\section{Complexity}\label{sec:complexity}

We will rely on some classical complexity results for the basic operations on polynomials and rational functions. Standard references for these questions are the books~\cite{GaGe03} and~\cite{BuClSh97}.
We will also use the fact that linear systems with polynomial coefficients can be solved efficiently using Storjohann and Villard's algorithm~\cite{StVi05}.
The needed results are summarized in the following lemma.

\begin{lem}\label{lem:complexity}
Addition, product and differentiation of rational functions in $\KK(x)$ of regular degree less than d,
as well as extended gcd and square-free decomposition in $\KK[x]_d$ can be performed using $\softO(d)$ operations in~$\KK$.

The kernel of a $s\times(s+1)$ matrix with polynomial entries in $\kk[x]_d$ can be solved using $\softO(s^\omega d)$ operations in $\kk$.
\end{lem}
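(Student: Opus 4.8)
The plan is to decompose each claimed bound into the standard building blocks whose complexities are catalogued in~\cite{GaGe03} and~\cite{BuClSh97}, and then to invoke~\cite{StVi05} for the linear-algebra part. For the first statement, addition of two rational functions of regular degree $<d$ costs one multiplication of two polynomials of degree $<2d$ for the cross terms plus a gcd to reduce the result to lowest form, i.e.\ $\softO(d)$ by fast multiplication and fast (half-)gcd; multiplication is handled the same way; differentiation of $p/q$ produces $(p'q-pq')/q^2$, again a constant number of polynomial multiplications of degree $O(d)$ followed by a gcd, so $\softO(d)$. Extended gcd in $\KK[x]_d$ is the fast extended Euclidean algorithm, which runs in $\softO(d)$ operations in $\KK$; square-free decomposition reduces to a bounded number of gcd computations and exact divisions via Yun's algorithm, hence also $\softO(d)$. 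I would simply cite the relevant chapters (fast multiplication, fast Euclidean algorithm, fast square-free factorization) rather than reproving these.

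For the second statement I would reduce the problem to the input format required by Storjohann and Villard's algorithm~\cite{StVi05}: computing a (minimal) basis of the kernel of an $s\times(s+1)$ matrix over $\kk[x]$ whose entries have degree at most $d$. Their result gives a cost of $\softO(s^\omega d)$ operations in $\kk$ for this task, which is exactly the claimed bound. The only point needing a word of justification is that the output kernel vector has polynomial (not just rational) entries of controlled degree, which follows from the Cramer-type bounds on nullspace bases of polynomial matrices and is already built into the cited algorithm's guarantees; I would state this and refer to~\cite{StVi05}.

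The main obstacle — such as it is — is purely bookkeeping: making sure the degree and dimension parameters are threaded consistently, in particular that "regular degree less than $d$" is the right normalization so that each intermediate polynomial in the rational-function operations stays of degree $O(d)$ (so that no hidden factor of the number of operations creeps in), and that the matrix in the second part is genuinely over $\kk[x]_d$ rather than over $\kk(x)$. Neither of these is deep; both are resolved by quoting the precise statements from~\cite{GaGe03,BuClSh97,StVi05}. Consequently the proof is essentially a pointer to the literature, and I would keep it to a couple of sentences per claim rather than reproducing the underlying fast algorithms.
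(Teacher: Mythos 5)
Your proposal is correct and matches the paper's treatment: the paper gives no proof of this lemma at all, merely stating in the preceding paragraph that these bounds are the standard results from~\cite{GaGe03,BuClSh97} for fast polynomial and rational-function arithmetic (fast multiplication, fast extended Euclidean algorithm, Yun-type square-free decomposition) and from Storjohann--Villard~\cite{StVi05} for the polynomial-matrix nullspace. Your slightly more detailed bookkeeping (reducing each rational-function operation to $O(1)$ multiplications and gcds of degree $O(d)$) is exactly the intended justification, so nothing is missing.
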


\subsection{Confinement}

\begin{lem}\label{lem:complexity-confinement} With $P\in\KK[x]$, $A/B\in\KK(x)$ as input, and $\delta=\max(\deg_xA,\deg_xB-1)$,
\emph{\textsf{Confinement}} performs at most
\[\bigO(\delta\deg_xP)\] operations in $\KK$.
\end{lem}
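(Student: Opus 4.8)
The plan is to analyze Algorithm \textsf{Confinement} line by line and count operations, since the algorithm is a straightforward loop. The main work happens inside the \textbf{for} loop, which runs $d+1 = \deg_x P - \delta + 1$ times. Inside each iteration we compute $q_i$ via the displayed recurrence, whose cost is dominated by the two sums: $\sum_{j=1}^{\delta} q_{i+j} a_{\delta-j}$ and $\sum_{j=1}^{\delta+1} q_{i+j} b_{\delta+1-j}$. Each of these is a dot product of length at most $\delta+1$, so each iteration costs $\bigO(\delta)$ operations in $\KK$ (the division by $c$ and the scalar operations are $\bigO(1)$). Multiplying by the number of iterations gives $\bigO(\delta \cdot \deg_x P)$ for the loop.

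Next I would account for the operations outside the loop. Extracting the coefficient lists of $A$, $B$, $P$ is free (or $\bigO(\deg_x P)$ at most), and the final line computes $P - (QB)' - QA$. Here $Q$ has degree $d = \deg_x P - \delta$, and $A$, $B$ have degree at most $\delta+1$, so the products $QB$ and $QA$ each cost $\bigO(\deg_x P \cdot \delta)$ by naive multiplication (which is all we can assume since the entries of $Q$ lie in $\KK = \kk(n)$ and we are counting operations in $\KK$, not bit operations); differentiation and subtraction are linear. Thus the post-loop cost is also $\bigO(\delta \deg_x P)$, and the total is $\bigO(\delta \deg_x P)$ as claimed.

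The only subtlety — and the step I expect to require a word of care — is confirming that each coefficient operation genuinely costs $\bigO(1)$ operations \emph{in $\KK$}: the recurrence for $q_i$ involves only additions, multiplications, and a single division in $\KK$, with no growth in the number of $\KK$-operations per step (growth in the degree in $n$ of these coefficients, tracked in Lemma~\ref{lem:bound-Confinement}, is irrelevant here since we count arithmetic operations in $\KK=\kk(n)$, not in $\kk$). So the bound $\bigO(\delta \deg_x P)$ follows by simply summing the per-iteration cost over the $d+1$ iterations and adding the final polynomial arithmetic. Should one wish to express everything in operations in $\kk$, one would instead invoke the degree bounds of Lemma~\ref{lem:bound-Confinement} and multiply, but the statement as given is at the level of $\KK$-operations, which makes the count immediate.
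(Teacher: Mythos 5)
Your proposal is correct and follows essentially the same argument as the paper: $\bigO(\delta)$ operations in $\KK$ per loop iteration, $\deg_xP-\delta+1$ iterations, plus the final polynomial arithmetic for $R$ (which the paper counts as $\softO(\deg_xP)$ via fast multiplication, while your naive count of $\bigO(\delta\deg_xP)$ fits the stated bound just as well). Your remark that the count is in $\KK=\kk(n)$-operations, so degree growth in $n$ is irrelevant here, is exactly the right reading of the statement.
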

\begin{proof}
Each iteration of the loop performs $\bigO(\delta)$ operations in $\KK$ and the loop is executed $\deg_xP-\delta+1$ times. The computation of $R$ then needs $\softO(\deg_x P)$ operations in~$\KK$.
\end{proof}

\subsection{Hermite Reduction}
\begin{lem} With the same notation as in the preceding lemma, $G$ a square-free factor of $B$ and $k$ a positive integer, \textsf{\emph{BasicReduction}}($P$, $A/B$, $G$, $k$) performs at most
\[\softO(k(\deg_xP+\delta))\]
operations in $\KK$.
\end{lem}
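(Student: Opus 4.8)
The plan is a bookkeeping analysis of Algorithm~\textsf{BasicReduction}, iteration by iteration. The loop is executed exactly $k$ times, so it suffices to prove that each pass costs $\softO(\deg_xP+\delta)$ operations in $\KK$; the bound $\softO(k(\deg_xP+\delta))$ then follows by summation.

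First I would establish a uniform degree bound on the intermediate polynomial $R$. Writing $R_0=P$ and $R_i$ for the value of $R$ after the $i$-th iteration, I claim $\deg_xR_i\le\max(\deg_xP-i\deg_xG,\,\delta-1)$. This is an easy induction from the update $R\gets(R-Q'B-QC)/G$: since $\deg_xC\le\delta$, $\deg_xB\le\delta+1$ (because $\delta\ge\deg_xB-1$), and $\deg_xQ<\deg_xG$, the numerator $R-Q'B-QC$ has degree at most $\max(\deg_xR,\,\deg_xG+\delta-1)$, and exact division by $G$ decreases this by $\deg_xG$. Two points are used here and must be checked: that the division by $G$ is indeed exact — $G\mid R-QC$ by the defining property of $Q$, and $G\mid Q'B$ since $G\mid B$ — and that $\deg_xG\le\deg_xB\le\delta+1=O(\delta)$. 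In particular $\deg_xR$ stays $O(\deg_xP+\delta)$ throughout (it actually decreases by $\deg_xG$ per step until it stabilizes near $\delta$), which also re-derives the degree-in-$x$ bound of Lemma~\ref{lem:bound-BasicReduction}.

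With the degrees under control, I would cost one iteration using Lemma~\ref{lem:complexity}. Computing $C=A+(i-k-1)BG'/G+B'$ costs $\softO(\delta)$ — the polynomial $BG'/G$ is well defined since $G\mid B$, and $C$ can even be updated incrementally by adding $BG'/G$. Extracting $Q$ from $R=QC+VG$ with $\deg_xQ<\deg_xG$ amounts to reducing $C$ and $R$ modulo $G$, inverting $C$ modulo $G$ by an extended gcd — the coprimality $\gcd(G,C)=1$ is exactly the precondition of the algorithm, guaranteed by Lemma~\ref{lemma:A+B'} — and one multiplication modulo $G$; this costs $\softO(\deg_xR+\delta)=\softO(\deg_xP+\delta)$. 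Finally, differentiating $Q$, forming the products $Q'B$ and $QC$, the subtraction, and the exact division by $G$ are all operations on polynomials of degree $O(\deg_xP+\delta)$, hence $\softO(\deg_xP+\delta)$ each. Summing over the $k$ iterations gives $\softO(k(\deg_xP+\delta))$.

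The main obstacle is precisely the uniform degree control of the second paragraph: one has to make sure the degree of $R$ does not accumulate over the $k$ steps and that every division by $G$ leaves no remainder. Both rest on the inequality $\deg_xB\le\delta+1$ and on $G\mid B$; once these are in hand, all the per-iteration costs are off-the-shelf from Lemma~\ref{lem:complexity}.
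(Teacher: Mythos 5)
Your proposal is correct and follows essentially the same route as the paper: the paper's proof simply notes that the dominant cost in each of the $k$ passes is the gcd/modular-inversion step, at $\softO(\deg_xP+\delta)$ operations by Lemma~\ref{lem:complexity}, and multiplies by $k$. Your additional bookkeeping (the exactness of the division by $G$ and the uniform bound $\deg_xR_i\le\max(\deg_xP-i\deg_xG,\delta-1)$, which matches Lemma~\ref{lem:bound-BasicReduction}) just makes explicit the degree control the paper leaves implicit.
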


\begin{proof}
The costly steps are the gcd computations, which according to Lemma \ref{lem:complexity} can be performed using $\softO(\deg_xP+\delta)$ operations in $\KK$. The result then follows since there are $k$ gcd computations.
\end{proof}

\begin{lem} With the same notation as in Lemma~\ref{lem:bound-HermiteReduction}, set $\epsilon=\sum_{g_k\neq1}{k}$. Then \textsf{\emph{HermiteReduction}}($P$,$H$,$S/T$) performs at most
	\[\softO(\max(\epsilon\deg_xP,d_H)+\deg_x g+\epsilon\delta)\] operations in $\KK=\kk(n)$.
\end{lem}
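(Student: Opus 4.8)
The plan is to analyze Algorithm \textsf{HermiteReduction} step by step and sum the costs of its constituent operations, following the structure already used in the proof of Lemma~\ref{lem:bound-HermiteReduction}. First I would account for the preprocessing: computing the square-free decomposition of the denominator $g$ of $H$ and the associated partial fraction decomposition costs $\softO(\deg_x g)$ operations in $\KK$ by Lemma~\ref{lem:complexity}. Forming $R\gets PU$ and the auxiliary fraction $K=S/T+(n-1)H'/H$ costs $\softO(\deg_xP+\deg_xg)$ operations (the degree of $U$ being bounded by $d_H$, and the degree of $K$ being controlled by $\deg_xg$ and the fixed data $S/T$). These terms are all absorbed by the claimed bound.

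Next I would bound the cost of the loop. For each square-free factor $g_k$ of multiplicity $k$, the call \textsf{BasicReduction}$(PU_k,K,g_k,k)$ costs $\softO(k(\deg_x(PU_k)+\delta))$ by the previous lemma. Here $\deg_x(PU_k)\le\deg_xP+\deg_xg_k$, and summing over $k$ gives $\sum_k k\deg_x(PU_k)=\bigO(\epsilon\deg_xP+\epsilon\deg_xg)$ using $\epsilon=\sum_{g_k\neq1}k$ and $\sum_k k\deg_xg_k\le\deg_xg$ — actually $\sum_k k\deg_xg_k=\deg_xg$ by definition of the square-free decomposition, so this is $\bigO(\epsilon\deg_xP+\deg_xg)$; and $\sum_k k\delta=\epsilon\delta$. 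One subtlety to address is that the intermediate polynomial $R$ passed through the loop has $x$-degree bounded by $\max(\deg_xP+d_H,\deg_xP,\delta)$ by Lemma~\ref{lem:bound-HermiteReduction}, which is why the term $d_H$ appears inside the first maximum of the stated bound: when $d_H$ is large compared to $\deg_xP$, the gcd operations in \textsf{BasicReduction} act on polynomials of degree $\Theta(d_H)$ rather than $\Theta(\deg_xP)$, contributing $\softO(\epsilon d_H)$. Combining, the loop costs $\softO(\epsilon\max(\deg_xP,d_H)+\deg_xg+\epsilon\delta)$, and adding the final summation $R+r_1+\dots+r_m$ (which costs $\softO$ of the sum of the degrees of the $r_k$, themselves bounded via Lemma~\ref{lem:bound-HermiteReduction}) does not exceed this.

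Finally I would collect all contributions: preprocessing $\softO(\deg_xP+\deg_xg)$, and the loop $\softO(\epsilon\max(\deg_xP,d_H)+\deg_xg+\epsilon\delta)$. Since $\epsilon\ge1$ whenever $g\neq1$ (and the case $g=1$ is trivial), the preprocessing is dominated, and the total is $\softO(\max(\epsilon\deg_xP,d_H)+\deg_xg+\epsilon\delta)$, which is exactly the claimed bound after noting $\epsilon\max(\deg_xP,d_H)\le\max(\epsilon\deg_xP,\epsilon d_H)$ and that the $\epsilon d_H$ term can be folded into $\epsilon\delta$ or $\max(\epsilon\deg_xP,d_H)$ depending on the regime — more precisely, since only the degree in $x$ of the operands matters and these are $\bigO(\max(\deg_xP+d_H,\delta))$, the factor $\epsilon$ multiplies $\deg_xP$ and $\delta$ but $d_H$ enters additively through $U$ and the first \textsf{BasicReduction} operand. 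The main obstacle I anticipate is precisely this careful bookkeeping of where the factor $\epsilon$ attaches versus where $d_H$ enters only additively: one must track the $x$-degrees of all intermediate polynomials (not just the final output) to confirm that the gcd computations inside \textsf{BasicReduction} do not blow up when $H$ has large degree at infinity, and that the sum $\sum_k k\deg_xg_k$ telescopes to $\deg_xg$ rather than something larger. Everything else is a routine application of Lemma~\ref{lem:complexity} together with the degree bounds of Lemma~\ref{lem:bound-HermiteReduction} and Lemma~\ref{lem:bound-BasicReduction}.
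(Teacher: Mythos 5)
Your overall accounting is the same as the paper's: square-free and partial fraction decomposition in $\softO(\deg_xg)$, the product $PU$ in $\softO(\max(\deg_xP,d_H))$, and the $k$-th call to \textsf{BasicReduction} on $PU_k$ in $\softO(k(\deg_xP+\deg_xg_k+\delta))$, summed using $\sum_k k\deg_xg_k\le\deg_xg$ and $\sum_{g_k\neq1}k=\epsilon$. That part already yields the stated bound.

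However, the middle of your argument contains a genuine error. You assert that the polynomial ``passed through the loop'' has $x$-degree up to $\deg_xP+d_H$, so that the gcd computations inside \textsf{BasicReduction} could act on polynomials of degree $\Theta(d_H)$ and contribute $\softO(\epsilon d_H)$. This misreads the data flow of \textsf{HermiteReduction}: the loop is not a chained reduction of $R=PU$. The polynomial $R=PU$ is set aside and only added back in the final return statement, while the $k$-th call operates on $PU_k$, whose degree is bounded in terms of $\deg_xP$ and $\deg_xg_k$ only; hence $d_H$ never enters any \textsf{BasicReduction} call. Your attempted repair --- that a putative $\epsilon d_H$ term ``can be folded into $\epsilon\delta$ or $\max(\epsilon\deg_xP,d_H)$'' --- is not valid in general, because $d_H$ is bounded neither by $\delta$ nor by $\deg_xP$: for $H=x^N/(x-1)^M$ with $N\gg M\gg1$ (and $S=0$) one has $\epsilon=M$, $\delta=1$, $d_H=N-M$, so $\epsilon d_H\approx MN$ while the stated bound is only about $\max(M\deg_xP,N)+2M$. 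In other words, if the loop really did cost $\softO(\epsilon d_H)$, the lemma itself would be false; it is true precisely because $d_H$ enters only additively, through the single product $PU$ (and the final sum), at cost $\softO(\max(\deg_xP,d_H))$. Deleting this detour (and the related remark that $d_H$ enters ``through the first \textsf{BasicReduction} operand'', which it does not) leaves a correct proof that coincides with the paper's.
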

\begin{proof}
By Lemma \ref{lem:complexity}, the square-free decomposition of~$g$ can be computed in $\softO(\deg_xg)$ operations and the product $PU$ is computed at a cost $\softO(\max(\deg_xP,\deg_x^\infty H))$. By the preceding lemma, the $k$-th call to \textsf{BasicReduction} uses $\softO(k(\deg_xP+\deg_xg_k+\delta))$ operations. The announced bound is then obtained by summation.
\end{proof}

\subsection{Mixed creative telescoping}

For the sake of simplicity, Algorithm~\textsf{MixedCT} searches for telescopers for all the possible orders, starting from $0$. In practice, a more efficient variant consists in carrying a dichotomic search of the order between $0$ and $\delta$. This way the complexity is that of the last step up to a logarithmic factor. Here, we analyze this variant.

\begin{lem} With the same notation as in Section~\ref{sec:mixedct-bound-deg}, and $\epsilon=\sum_{g_k\neq1}{k}$, the number of operations in $\KK=\kk(n)$ performed by \textsf{\emph{MixedCT}}($P$,$H$,$S/T$) is
	\begin{compactenum}
		\item $\softO(\delta\max(\epsilon\deg_xP,d_H)+\delta\deg_xg+\epsilon\delta^2+\delta^\omega)$ if $d_H<0$;
		\item $\softO(\delta\max(\epsilon\deg_xP,\delta d_H)+\delta\deg_xg+\epsilon\delta^2+\delta^3)$ if $d_H\ge 0$.
	\end{compactenum}
\end{lem}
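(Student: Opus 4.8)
The plan is to bound the cost of the dichotomic-search variant by the cost of its most expensive iteration, and to analyse that iteration by composing the complexity bounds already established for \textsf{Confinement}, \textsf{HermiteReduction}, and the linear-algebra step. First I would observe that the dichotomic search performs $\bigO(\log\delta)$ iterations, each of which computes a confined Hermite reduction at some index $k\le\delta$ and then tests a rank condition / solves a linear system; since the degrees in $n$ grow monotonically along the reduction chain (Lemma~\ref{lem:bound-matrix}), the last iteration dominates, so up to a logarithmic factor — absorbed into the $\softO$ — it suffices to bound one iteration at index $k=\bigO(\delta)$.

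Next I would account for the cost of producing $R_0,\dots,R_\delta$. The initial confinement costs $\bigO(\delta\deg_xP)$ operations in $\kk(n)$ by Lemma~\ref{lem:complexity-confinement}, and then each of the $\bigO(\delta)$ subsequent steps consists of one shift, one call to \textsf{HermiteReduction} (cost $\softO(\max(\epsilon\deg_xP,d_H)+\deg_xg+\epsilon\delta)$ by the preceding lemma) applied to a polynomial of $x$-degree $<\delta$, followed by a call to \textsf{Confinement} on a polynomial of $x$-degree $\le\delta$ when $d_H\ge0$ (and no confinement when $d_H<0$). Multiplying by the $\bigO(\delta)$ steps gives the term $\softO(\delta\max(\epsilon\deg_xP,d_H)+\delta\deg_xg+\epsilon\delta^2)$ in both cases; in the $d_H\ge0$ case each confinement call additionally costs $\bigO(\delta^2)$ operations, but summed over $\bigO(\delta)$ steps this is $\bigO(\delta^3)$, which is already subsumed by the linear-algebra term below. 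The subtlety to state carefully here is that all these operations take place in $\KK=\kk(n)$ and the lemma counts operations in $\kk(n)$, so the degree growth in $n$ (Lemma~\ref{lem:bound-matrix}) does not enter the count — it will only matter later when converting to operations in $\kk$.

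Finally I would handle the linear-algebra step. We must detect the first linear dependence among $R_0,\dots,R_k$ over $\kk(n)$ and solve for the $c_i$; these are at most $\delta+1$ vectors in $\kk(n)[x]_\delta$, i.e.\ an $\bigO(\delta)\times\bigO(\delta)$ system with entries that are polynomials in $x$ of degree $<\delta$ over $\kk(n)$. By the second part of Lemma~\ref{lem:complexity}, computing the kernel costs $\softO(\delta^\omega\cdot\delta)=\softO(\delta^{\omega+1})$ operations in $\kk(n)$ when $d_H<0$; when $d_H\ge0$ the output of \textsf{HermiteReduction} has $x$-degree up to $\delta+d_H$ before the re-confinement, but after confinement the vectors again lie in $\kk(n)[x]_\delta$, so one might expect the same $\softO(\delta^{\omega+1})$ — however the stated bound has $\delta^3$ rather than $\delta^{\omega+1}$ here, reflecting that in this regime one uses a direct (non-fast) solve, or that the $\delta^3$ contribution from the confinement calls already dominates; I would simply record $\softO(\delta^3)$ as the linear-algebra contribution when $d_H\ge0$. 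Adding the three contributions in each case yields exactly the two displayed bounds. The main obstacle is bookkeeping consistency: making sure the "operations in $\kk(n)$" accounting is uniform across the three sub-algorithms, that the $\max$ inside the Hermite bound is propagated correctly (in particular the $\delta d_H$ term in case~2 comes from the re-confinement applied after each Hermite step rather than from the Hermite step itself), and that the dichotomic-search reduction to the last iteration is genuinely valid — i.e.\ that intermediate iterations are no more expensive, which follows because both the $x$-degrees ($\le\delta$, resp.\ $\le\delta+d_H$) and the structure of each step are index-independent.
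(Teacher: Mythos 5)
There is a genuine gap in your accounting of the linear-algebra step, and it is precisely the point where your bound fails to match the statement. After each confinement, the vectors $R_0,\dots,R_k$ lie in $\kk(n)[x]_\delta$, a $\delta$-dimensional vector space over $\KK=\kk(n)$; extracting the coefficients of $x^0,\dots,x^{\delta-1}$ turns the rank test and the solve for the $c_i$ into a $\delta\times(\delta+1)$ linear system whose entries are \emph{scalars of $\KK$}, not ``polynomials in $x$ of degree $<\delta$ over $\kk(n)$'' as you write. Counted in operations in $\KK$, its kernel therefore costs $\softO(\delta^\omega)$ — which is exactly the term appearing in case~1 — with no extra factor $\delta$. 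Your invocation of the second part of Lemma~\ref{lem:complexity} is a misapplication twice over: that lemma concerns matrices with entries in $\kk[x]_d$ and counts operations in $\kk$, and in this paper it is only used later, in Theorem~\ref{thm:global-complexity}, once the degree-in-$n$ bound $\mu$ from Lemma~\ref{lem:bound-matrix} is available to convert operations in $\kk(n)$ into operations in $\kk$. As written, your argument proves only $\softO(\dots+\delta^{\omega+1})$ in case~1, which is strictly weaker than the claim.

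Your repair attempt for case~2 also does not work: since $\omega\le 3$, one has $\delta^{\omega+1}\ge\delta^3$, so a $\delta^{\omega+1}$ linear-algebra cost would \emph{not} be subsumed by the $\delta^3$ term, and no switch to a ``direct solve'' is needed or intended. In the paper's argument the $\delta^3$ of case~2 comes from the at most $\delta+1$ confinement calls, each applied (by Lemma~\ref{lem:bound-HermiteReduction}) to a polynomial of $x$-degree at most $\delta+d_H$ and hence costing $\bigO(\delta(\delta+d_H))$ by Lemma~\ref{lem:complexity-confinement}; summing gives $\bigO(\delta^3+\delta^2 d_H)$, where $\delta^2 d_H$ is absorbed into $\delta\max(\epsilon\deg_xP,\delta d_H)$ and $\delta^3$ also dominates the $\softO(\delta^\omega)$ kernel cost. (On this last point your remark that the $\delta d_H$ contribution originates in the re-confinements rather than in the Hermite steps is correct.) The rest of your plan — construction of the $R_i$ via $\delta$ Hermite reductions, no confinement when $d_H<0$, and the dichotomic search contributing only a logarithmic factor — matches the paper.
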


\begin{proof}
	When $d_H<0$, by Lemma~\ref{lem:bound-HermiteReduction} there are no confinement steps and the construction of the system to solve amounts to $\delta$ Hermite reductions. The algorithm then computes a vector in the kernel of a $\delta\times(\delta+1)$ matrix, which by Lemma~\ref{lem:complexity} can be performed in $\softO(\delta^\omega)$ operations in $\KK$, hence the complexity.
	
	When $d_H\ge 0$, we have to add the cost of the confinement steps, which by Lemma~\ref{lem:bound-HermiteReduction} are performed on polynomials of degree at most $\delta+d_H$. There are at most $\delta+1$ calls to confinement, so the result follows from Lemma~\ref{lem:complexity-confinement}.
\end{proof}

\begin{thm}\label{thm:global-complexity}
	With the same notation, set $\mu=\max(a,b)$ where $\deg_n P\cdot[1]/[0]+\alpha+\delta(\beta+\gamma)=[a]/[b]$, the number of operations in $\kk$ performed by Algorithm \textsf{\emph{MixedCT}} is
	\[\softO(\mu(\epsilon\delta\deg_xP+\delta\deg_xg+\epsilon\delta^2+\delta^\omega)),\]
	if $d_H<0$, or
	\[\softO(\mu(\epsilon\delta\deg_xP+\delta\deg_xg+\epsilon\delta^2+\delta^3+\delta^2d_H))\]
	if $d_H\ge 0$.
\end{thm}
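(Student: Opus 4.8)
The plan is to combine the per-step degree bounds in~$n$ from Section~\ref{sec:mixedct-bound-deg} with the per-step arithmetic counts over $\KK=\kk(n)$ obtained in the preceding lemma of this section, converting the latter into operations in~$\kk$ by accounting for the growth of the coefficients in~$n$. The starting point is that, by Lemma~\ref{lem:bound-matrix}, every polynomial handled after the $i$-th reduction step (confinement, Hermite reduction, and the intermediate re-confinement when $d_H\ge 0$) has rational degree in~$n$ bounded by $\deg_nP\cdot[1]/[0]+\alpha+i(\beta+\gamma)$, hence by $[a]/[b]$ with $\mu=\max(a,b)$ once $i$ reaches its maximal value $\delta$. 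So throughout the whole run of \textsf{MixedCT} every element of $\kk(n)[x]$ that is manipulated has numerator and denominator in $n$ of degree $\bigO(\mu)$.

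Next I would observe that each of the $\softO(1)$ basic operations in $\KK=\kk(n)$ counted in the previous lemma --- additions, products, differentiations, extended gcds, square-free decompositions of objects of bounded $x$-degree, and the single kernel computation --- is an operation on rational functions of $n$ of degree $\bigO(\mu)$, and therefore costs $\softO(\mu)$ operations in~$\kk$ (the kernel step itself is handled by Lemma~\ref{lem:complexity} with ground field $\kk$ rather than $\KK$, contributing $\softO(\delta^\omega\mu)$, absorbed into the stated bound). Multiplying the $\KK$-operation count of the previous lemma by this $\softO(\mu)$ factor yields exactly $\softO\bigl(\mu(\epsilon\delta\deg_xP+\delta\deg_xg+\epsilon\delta^2+\delta^\omega)\bigr)$ in the case $d_H<0$ and $\softO\bigl(\mu(\epsilon\delta\deg_xP+\delta\deg_xg+\epsilon\delta^2+\delta^3+\delta^2 d_H)\bigr)$ in the case $d_H\ge 0$, where the $\delta^2 d_H$ term comes from the $\delta$ re-confinement steps acting on polynomials of $x$-degree $\bigO(\delta+d_H)$ as in the previous lemma's proof.

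The main subtlety --- rather than a deep obstacle --- is to make sure the $n$-degree bound $\mu$ is uniform over all intermediate quantities, not merely over the final $R_i$: one must check that the auxiliary polynomials inside \textsf{HermiteReduction} and \textsf{Confinement} (the $q_i$, the modular inverses, the $Q,\tilde Q,\overline Q$) do not have larger $n$-degree than the outputs they produce, which is precisely what Lemmas~\ref{lem:bound-Confinement}, \ref{lem:modular inverse}, \ref{lem:bound-BasicReduction} and \ref{lem:bound-HermiteReduction} were set up to guarantee, since each of those bounds is stated as a bound on $\Rdeg Q-\Rdeg P$ for the internal $Q$'s as well. A second routine point is that the $\tilde O$ notation must be allowed to swallow the logarithmic factor coming from the dichotomic search on the order (as already noted before the previous lemma) as well as the $\log\mu$ and $\log\delta$ factors from fast arithmetic; with that convention the two displayed estimates follow by straightforward summation.
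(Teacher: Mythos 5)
Your proposal is correct and follows essentially the same route as the paper: the $\KK=\kk(n)$ operation count of the preceding lemma, the uniform bound $\mu$ on the $n$-degrees of all intermediate rational functions coming from Lemma~\ref{lem:bound-matrix} (and the per-step bounds of Lemmas~\ref{lem:bound-Confinement}--\ref{lem:bound-HermiteReduction}), and the Storjohann--Villard kernel computation over $\kk[n]$ at cost $\softO(\delta^\omega\mu)$ via Lemma~\ref{lem:complexity}. The paper's proof is just a condensed version of this same argument, so nothing further is needed.
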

\begin{proof}
The result follows directly from the preceding lemma and the fact that all the elements of $\KK=\kk(n)$ appearing in the construction of the linear system have numerator and denominator of degree in $n$ bounded by $\mu$. The cost of solving is then $\softO(\delta^\omega\mu)$ operations in $\kk$ by Lemma~\ref{lem:complexity}.
\end{proof}

The complexity result from Theorem~\ref{thm:main theorem} follows directly.

\section{Experiments and Applications}\label{sec:applications}
\subsection{Various Integrals} 
\label{sub:various_integrals}
\begin{example}The Jacobi polynomials have the following integral representation, up to a factor that does not depend on~$n$:
\[\oint{\left(\frac{z^2-1}{2(z-x)}\right)^n(1-z)^\alpha(1+z)^\beta\frac{dz}{z-x}} ,\]
with a contour enclosing $z=x$ once in the positive sense~\cite[18.10.8]{OlverLozierBoisvertClark2010}. It is well-known that the Jacobi polynomials satisfy a recurrence of order~2. Theorem~\ref{thm:mixedct-order} is tight in that case: it predicts a bound~2 on the order of the telescoper, since the logarithmic derivative of the integrand has numerator of degree~2 and denominator of degree~3.
\end{example}
Note however that in such an example, a more direct and efficient way to obtain a recurrence is to change the variable $z$ into $x+u$ making the integral that of the extraction of the $n$-th coefficient in a hyperexponential term. That is achieved easily by translating the first order linear differential equation it satisfies into a linear recurrence. The only difference is that our method guarantees the minimality of the telescoper.
\begin{example}
A typical example where several of the difficulties are met at once is the term
\[\left(1+\tfrac{x}{n^2+1}\right)\left(\tfrac{(x+1)^2}{(x-4)(x-3)^2(x^2-5)^3}\right)^{\!n}\!\sqrt{x^2-5}\,e^{\frac{x^3+1}{x(x-3)(x-4)^2}}.\]
Our code\footnote{The code and Maple worksheet for these examples can be found at \url{http://mixedct.gforge.inria.fr}.} finds a telescoper of order~9 and degree~90 in 1.4 sec. The only other code we are aware of that can perform this computation is Koutschan's \texttt{HolonomicFunctions} package\cite{Koutschan14}, which takes more than 3 min. This is by no means a criticism of this excellent and very versatile package but rather to indicate the advantage of implementing specialized algorithms like ours for this class.
\end{example}

\subsection{Inversion of rational functions}
Here is a generalization of Manivel's lemma~\cite{Furter15}, which led us to this work.
It gives an efficient way to compute the recurrence satisfied by the coefficients of the compositional inverse of a rational function. The starting point is Lagrange inversion. Let $f\in\QQ(x)$ be a rational function such that $f(0)=0$, and denote by $f^{(-1)}$ its compositional inverse. By Cauchy's formula, the $n$-th coefficient $u_n$ of $f^{(-1)}$ is given by
\[u_n=\frac{1}{2\pi i}\oint{f^{(-1)}(x)\frac{\mathrm{d}x}{x^{n+1}}},\]
where the contour is a small circle around the origin. Integrating by parts and then using the change of variables $x=f(u)$ yields
\[u_n=\frac{1}{2\pi in}\oint{\frac{f^{(-1)'}(x)\mathrm{d}x}{x^n}}=\frac{1}{2\pi in}\oint{\frac{\mathrm{d}u}{f(u)^n}}.\]
Thus a recurrence for $u_n$ can be computed with Algorithm \textsf{MixedCT}. Theorem~\ref{thm:mixedct-order} and Theorem~\ref{thm:bound-degree} then provide bounds for the order and degree of this recurrence.

\begin{thm}\label{th:invrat}
	Let $f\in\QQ(x)$ be a rational function such that $f(0)=0$. Write $f=P/Q$ and denote $P=P_1P_2^2\cdots P_m^m$ the square-free decomposition of $P$. Also denote $p,p^\star,q,q^\star$ the degrees of $P,P^\star,Q,Q^\star$ respectively.
	Then the Taylor coefficients of $f^{(-1)}$ satisfy a recurrence of order at most
	\[q^\star+p^\star-1\]
	and of rational degree in $n$ at most
	\[\frac{(q^\star+p^\star)(q^\star+p^\star+1)}{2}(\max(q-p,0)+\sum_{P_k\neq 1}{k})\cdot\frac{[1]}{[1]}.\]
\end{thm}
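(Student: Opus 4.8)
The plan is to specialize the general bounds of Theorem~\ref{thm:mixedct-order} and Theorem~\ref{thm:bound-degree} to the integrand $1/f(u)^n$ arising from Lagrange inversion, after putting it in the standard input form~\eqref{eqn:input}. Here the term is $F_n(u) = f(u)^{-n}$, so in the notation of~\eqref{eqn:input} we have $P(n,u) = 1$, $H(u) = 1/f(u) = Q/P$, and $S/T = 0$ (there is no exponential part). Thus $\Phi'/\Phi = n H'/H = -n f'/f$, and writing $f = P/Q$ gives $f'/f = P'/P - Q'/Q$. I would first compute the rational degree $\operatorname{Rdeg}$ of $A/B = -n(P'/P - Q'/Q)$ in lowest terms. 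The denominator $B$ is, up to constants, the squarefree part of $PQ$, i.e.\ $P^\star Q^\star$ (using $\gcd(P,Q)=1$), so $\deg_u B = p^\star + q^\star$. The numerator $A$ has degree one less than $B$ (since $A/B$ is a logarithmic-derivative combination, each term contributing a simple pole), so $\deg_u A = p^\star + q^\star - 1$.

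**Next I would feed these into $\delta$.** From~\eqref{eq:eqdelta}, $\delta = \max(\deg_u A, \deg_u B - 1) = \max(p^\star + q^\star - 1, p^\star + q^\star - 1) = q^\star + p^\star - 1$; this is exactly the claimed order bound, and it comes from the $\delta = \deg_x B - 1$ branch (the "otherwise" case in the degree lemmas), since in fact we are in the boundary case $\deg_u A = \deg_u B - 1$. One subtlety: the contour integral $\oint du / f(u)^n$ is a derivative-closed operation on the Laurent expansion at $0$, so $L(I_n) = 0$ really does follow from the telescoping relation; and to get minimality one must check that $\Phi'/\Phi = -n f'/f$ has no positive integer residue — but the residues of $-nf'/f$ at a pole of $f$ of order $k$ (resp. zero of order $k$) are $nk$ (resp. $-nk$), which as polynomials in $n$ are not positive integers, so the minimal decomposition hypothesis is automatic and the order bound applies to the minimal-order telescoper.

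**For the degree bound** I would invoke Theorem~\ref{thm:bound-degree}: $\deg_n L \le r(\deg_n P + \deg_u H) + \max(\deg_u P - \delta, 0)$. Here $\deg_n P = 0$ (as $P(n,u)=1$), $\deg_u P = 0$ so the last term vanishes, and $\deg_u H = \deg_u (Q/P) = \max(p,q)$; but this gives a bound of the form $r\max(p,q)$, not quite the stated one. The stated form $\tfrac{(q^\star+p^\star)(q^\star+p^\star+1)}{2}(\max(q-p,0) + \sum_{P_k\neq1}k)$ must instead come from accumulating the per-step increases in Lemma~\ref{lem:bound-matrix}: with $P(n,u)=1$, $\alpha$ reduces to $[\max(\lfloor -\delta/(\delta - \deg_u A)\rfloor + 1, 0)]/[0] = [0]/[0]$ (since $\delta = \deg_u A$... wait, $\delta - \deg_u A = 0$ here, so one is in the $\delta = \deg_x A$ branch with $\deg_u P - \delta + 1 = -\delta + 1 \le 0$, giving $\alpha = 0$), $d_H = \deg_u^\infty H = q - p$ so $\gamma = (q-p+1)[1]/[1]$ if $q \ge p$ and $0$ otherwise, and $\beta$ captures the $\sum k$ terms coming from the squarefree structure of the denominator of $H$ (here the denominator of $H = Q/P$ is $P$, whose squarefree decomposition has exponents matching $P = P_1 P_2^2 \cdots P_m^m$, contributing $\sum_{P_k\neq1} k$ to $\beta$). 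Summing $\alpha + i(\beta+\gamma)$ for $i$ from $1$ to $r = \delta = q^\star + p^\star - 1$ telescopes to the triangular-number prefactor $\tfrac{(q^\star+p^\star-1)(q^\star+p^\star)}{2}$ — and here I'd need to double-check whether the running index goes up to $\delta$ or the slightly larger $\delta+1$ (which would turn $(q^\star+p^\star-1)(q^\star+p^\star)$ into $(q^\star+p^\star)(q^\star+p^\star+1)$), since the paper's bound has the latter.

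**The main obstacle** will be reconciling the two ways of bounding $\deg_n L$ (the clean Theorem~\ref{thm:bound-degree} statement versus the sharper accumulation via Lemma~\ref{lem:bound-matrix}) and pinning down the exact index range and the exact form of $\beta$ when $S/T = 0$: one must verify that with no exponential factor, the quantities $e = \gcd(g,T,T')$ and $f = \gcd(g/e,T)$ in Lemma~\ref{lem:bound-HermiteReduction} both collapse (since $T = 1$), so that $\deg_u f = 0$ and the $e_k$ terms vanish, leaving $\beta = [\sum_{h_k\neq1}k]/[\sum_{h_k\neq1}k] = [\sum_{P_k\neq1}k]/[\sum_{P_k\neq1}k]$ — and then checking that $\beta + \gamma$, when $q \ge p$, combines to $(\sum_{P_k\neq1}k + q - p)[1]/[1]$, and when $q < p$ to just $(\sum_{P_k\neq1}k)[1]/[1]$, i.e.\ $(\max(q-p,0) + \sum_{P_k\neq1}k)[1]/[1]$ in both cases, matching the per-step increment inside the theorem. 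The rest is the arithmetic of summing a constant increment over the $r$ reduction steps, which is routine.
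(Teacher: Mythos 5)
Your proposal takes essentially the same route as the paper, which itself only sketches this proof: rewrite $u_n$ as $\frac{1}{2\pi i n}\oint f(u)^{-n}\,du$, feed $P=1$, $H=Q/P$, $S/T=0$ into the general machinery, get the order bound from $\delta=\deg_x B-1=p^\star+q^\star-1$ (with $B$ the square-free part of $PQ$), and get the degree bound not from Theorem~\ref{thm:bound-degree} directly but by accumulating the per-step increments of Lemma~\ref{lem:bound-matrix} (here $\beta=\sum_{P_k\neq1}k\cdot\frac{[1]}{[1]}$ and $\gamma$ accounting for $\max(q-p,0)$) through the final linear-algebra step. The bookkeeping you flag is harmless: a careful count (noting that when the input to \textsf{Confinement} has degree $\delta-1+d_H$ the increment is only $d_H\cdot\frac{[1]}{[1]}$, and that no confinement is needed when $d_H\le 0$) gives per-step increment $\bigl(\max(q-p,0)+\sum_{P_k\neq1}k\bigr)\cdot\frac{[1]}{[1]}$ and prefactor $\frac{(q^\star+p^\star-1)(q^\star+p^\star)}{2}$, which is at most the slightly slacker constant stated in the theorem.
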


\begin{table}[h]
\begin{center}\begin{tabular}{rrrrr}
  \hline $k$& order & degree & coeffs & time \\[1mm]
  \hline\\[-3mm]
5&   10&  61&  1759& 2.49  \\ 
6&   12&  88&  2440& 4.64  \\ 
7&   14& 120&  3778& 13.36 \\
8&   16& 157&  4666& 33.89 \\
9&   18& 199&  6192& 88.34 \\
10&  20& 246&  8364& 260.59\\
11&  22& 298& 10146& 628.21\\
12&  24& 355& 11802&1451.54\\
  \hline 
\end{tabular}\end{center}
\caption{Order, degree, bit size and timings for Example~\ref{ex:experiments1}\label{tab:exp1}}
\end{table}

\begin{example}\label{ex:experiments1}Experimental results on the family of rational functions 
$f_k=xP_k(x)^2/Q_k(x)$
with $P_k$ and $Q_k$ two dense polynomials of degree~$k$ and integer coefficients of absolute value bounded by 100 are presented in Table~\ref{tab:exp1}. The first column gives the index~$k$. The second one is the order of the minimal-order telescopers, which is as predicted by Theorem~\ref{th:invrat}. The next one gives the degree of the telescoper; it displays a quadratic growth, as predicted by Theorem~\ref{th:invrat}. The column ``coeffs'' gives the bit size of the largest coefficient of the telescoper, whose growth seems slightly more than quadratic. Finally, the time  (in seconds) taken by our implementation is given in the last column.
\end{example}


\smallskip

{\bf Acknowledgements.} We are grateful to the referees for their thorough work and helpful comments. This work has been supported in part by FastRelax ANR-14-CE25-0018-01.



%
\let\oldbibliography\thebibliography
\renewcommand{\thebibliography}[1]{%
  \oldbibliography{#1}%
  \setlength{\itemsep}{1.5pt}%
}
\makeatletter
\def\section{%
    \@startsection{section}{1}{\z@}{-10\p@ \@plus -4\p@ \@minus -2\p@}
    {14\p@}{\baselineskip 14pt\secfnt\@ucheadtrue}%
}
\makeatother

\small
\smallskip



\end{document}